\documentclass[journal]{IEEEtran}
\IEEEoverridecommandlockouts
\usepackage{lineno}
\usepackage{hyperref}
\usepackage{cite}
\usepackage{amsmath,amssymb,amsfonts,cases} 
\usepackage{amsmath}
\usepackage{amsthm}

\usepackage{graphicx}
\usepackage{textcomp}
\usepackage{xcolor}
\usepackage{graphicx}
\usepackage{float}
\usepackage{subfigure}
\usepackage{amsmath,mleftright}
\usepackage{amsfonts,amssymb}
\usepackage{mathrsfs}
\usepackage{mathtools}
\usepackage{algorithm}
\usepackage{algorithmic}
\usepackage{bm}
\usepackage{multirow}
\usepackage{array}
\usepackage{amssymb}
\usepackage{amsmath}
\usepackage{cite}
\usepackage{url}
\usepackage{xcolor}
\usepackage{cite,graphicx,amsmath,amssymb}
\usepackage{subfigure}
\usepackage{fancyhdr}
\usepackage{mdwmath}
\usepackage{mdwtab}
\usepackage{caption}
\usepackage{amsthm}
\usepackage{setspace}
\usepackage{bm}
\usepackage{mathtools}
\usepackage{dsfont}
\usepackage{bbm}
\usepackage{framed}
\newtheorem{remark}{Remark}
\newtheorem{theorem}{Theorem}

\newtheorem{lemma}{Lemma}

\newtheorem{corollary}{Corollary}

\makeatletter
\newcommand{\biggg}{\bBigg@{3}}
\newcommand{\Biggg}{\bBigg@{3.5}}
\makeatother
\makeatletter
\renewcommand{\maketag@@@}[1]{\hbox{\m@th\normalsize\normalfont#1}}%
\makeatother
\def\BibTeX{{\rm B\kern-.05em{\sc i\kern-.025em b}\kern-.08em
    T\kern-.1667em\lower.7ex\hbox{E}\kern-.125emX}}
    \expandafter\def\expandafter\normalsize\expandafter{%
    \normalsize%
    \setlength\abovedisplayskip{4pt}%
    \setlength\belowdisplayskip{4pt}%
    \setlength\abovedisplayshortskip{2pt}%
    \setlength\belowdisplayshortskip{2pt}%
}
\allowdisplaybreaks[4]
\begin{document}
\title{On the Maintainability of Pinching-Antenna Systems: A Failure-Repair Perspective}
\author{Chongjun~Ouyang, Hao~Jiang, Zhaolin~Wang, Yuanwei~Liu, and Zhiguo~Ding\vspace{-10pt}
\thanks{C. Ouyang and H. Jiang are with the School of Electronic Engineering and Computer Science, Queen Mary University of London, London, E1 4NS, U.K. (e-mail: \{c.ouyang, hao.jiang\}@qmul.ac.uk).}
\thanks{Z. Wang and Y. Liu are with the Department of Electrical and Electronic Engineering, The University of Hong Kong, Hong Kong (email: \{zhaolin.wang, yuanwei\}@hku.hk).}
\thanks{Z. Ding is with the School of Electrical and Electronic Engineering (EEE), Nanyang Technological University, Singapore 639798 (zhiguo.ding@ntu.edu.sg).}}
\maketitle
\begin{abstract}
The pinching-antenna system (PASS) enables wireless channel reconfiguration through optimized placement of pinching antennas along dielectric waveguides. In this article, a unified analytical framework is proposed to characterize the maintainability of PASS. Within this framework, random waveguide failures and repairs are modeled by treating the waveguide lifetime and repair time as exponentially distributed random variables, which are characterized by the failure rate and the repair rate, respectively. The operational state of the waveguide is described by a two-state continuous-time Markov chain, for which the transition probabilities and steady-state probabilities of the waveguide being \emph{working} or \emph{failed} are analyzed. By incorporating the randomness of the waveguide operational state into the transmission rate, system maintainability is characterized using the probability of non-zero rate (PNR) and outage probability (OP). The proposed framework is applied to both a conventional PASS employing a single long waveguide and a segmented waveguide-enabled pinching-antenna system (SWAN) composed of multiple short waveguide segments under two operational protocols: segment switching (SS) and segment aggregation (SA). Closed-form expressions for the PNR and OP are derived for both architectures, and the corresponding scaling laws are analyzed with respect to the service-region size and the number of segments. It is proven that both SS-based and SA-based SWAN achieve higher PNR and lower OP than conventional PASS, which confirms the maintainability advantage of segmentation. Numerical results demonstrate that: \romannumeral1) the maintainability gain of SWAN over conventional PASS increases with the number of segments, and \romannumeral2) SA provides stronger maintainability than SS.
\end{abstract}
\begin{IEEEkeywords}
Failure rate, pinching antennas, repair rate, segmented waveguide, system maintainability. 
\end{IEEEkeywords}
\section{Introduction}
Reconfigurable antennas have recently emerged as a pivotal building block for tri-hybrid multiple-input multiple-output (MIMO) architectures \cite{heath2025tri}. These architectures are expected to meet the growing demand for high-capacity connectivity and high-precision sensing in sixth-generation (6G) networks, and they may reshape future wireless transmission paradigms \cite{heath2025tri}. Reconfigurable antennas adjust their radiation patterns in real time, which enables flexible electromagnetic (EM) beam control and channel reconfiguration. Representative examples include reconfigurable intelligent surfaces (RISs) \cite{wu2025intelligent}, fluid antennas \cite{new2024tutorial}, and movable antennas \cite{zhu2025tutorial}. RISs manipulate EM waves through programmable meta-atoms \cite{wu2025intelligent}, whereas fluid and movable antennas change their physical positions to mitigate small-scale fading \cite{new2024tutorial,zhu2025tutorial}. 

Despite their strong potential in communications and sensing, the reconfiguration capability of these technologies is typically limited to apertures that span only a few to several tens of wavelengths. This spatial extent does not sufficiently address two fundamental constraints imposed by wave propagation physics, particularly at the high-frequency bands envisioned for 6G systems, such as the 7-24 GHz upper mid-bands \cite{bjornson2025enabling}. These constraints include severe large-scale free-space path loss and high sensitivity to signal blockage \cite{liu2025pinching}.
\subsection{Pinching Antennas}
To address these limitations, NTT DOCOMO introduced the pinching-antenna system (PASS) and demonstrated a working prototype \cite{suzuki2022pinching}. PASS uses low-loss dielectric waveguides as signal conduits, where small dielectric particles, termed pinching antennas (PAs), are deployed along the waveguides. Each PA can radiate EM signals into free space or couple free-space signals back into the waveguide. When the PA locations are adjusted, the phases and amplitudes of the radiated signals can be controlled, which enables flexible beam-pattern shaping via pinching beamforming \cite{liu2025pinching}. In contrast to existing reconfigurable-antenna technologies, PASS supports waveguides with arbitrarily long extent. This feature allows PAs to be placed close to users, which creates strong line-of-sight (LoS) links. PASS therefore alleviates large-scale path loss and reduces vulnerability to blockage \cite{suzuki2022pinching}. Overall, PASS combines low propagation loss with a high degree of EM-field controllability, which makes it a promising architecture for next-generation wireless communications \cite{liu2025pinching}.

Early field tests by NTT DOCOMO confirmed the feasibility of using pinched waveguides to enhance network coverage and throughput \cite{yamamoto2021pinching}. These demonstrations stimulated growing research interest in PASS for communications enhancement. The work in \cite{ding2024flexible} analyzed the average transmission rate when PAs serve mobile users and provided a theoretical comparison with conventional fixed-position antenna systems. Subsequent studies analyzed outage probability (OP) \cite{tyrovolas2025performance}, array gain \cite{ouyang2025array}, channel capacity \cite{ouyang2025capacity}, and the impact of LoS blockage \cite{ding2025blockage} of PASS channels. Collectively, these results indicate that PASS alleviates large-scale path loss and LoS blockage, and it can outperform traditional reconfigurable-antenna systems, such as RISs and fluid/movable antennas \cite{samy2025pinching,ouyang2025array}. 

Building on this foundation, a range of pinching beamforming algorithms have been developed to optimize the placement of active PAs along the waveguide. Representative designs have been conducted for single-user PASS \cite{xu2024rate} and for multiuser PASS under orthogonal multiple access \cite{zeng2025energy} and non-orthogonal multiple access \cite{bereyhi2025downlink,wang2025modeling}. Beyond communications, PASS has also attracted increasing interest in wireless sensing. The Cram\'{e}r-Rao bounds (CRBs) for PASS-assisted localization were analyzed in \cite{ding2025pinching,jiang2025pinching}. Other sensing applications, such as target detection \cite{qin2025joint}, target tracking \cite{wang2025wireless}, and radar cross-section sensing \cite{ouyang2025isac}, have also been studied.
\subsection{From Monolithic Waveguides to Segmented Waveguides}

\begin{figure}[!t]
\centering
    \subfigure[System setup.]
    {
        \includegraphics[width=0.45\textwidth]{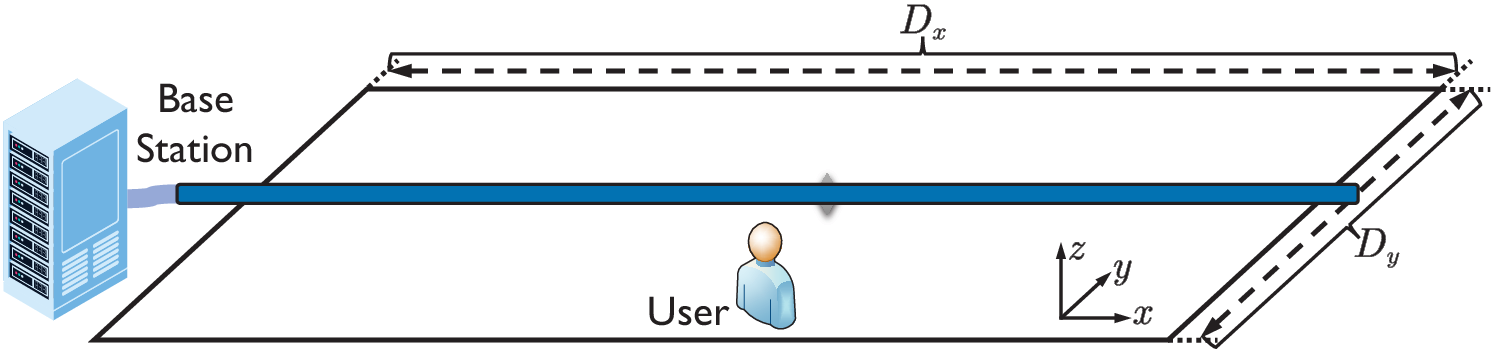}
	   \label{Figure: PASS_System_Model}
    }
   \subfigure[Monolithic waveguide.]
    {
        \includegraphics[width=0.45\textwidth]{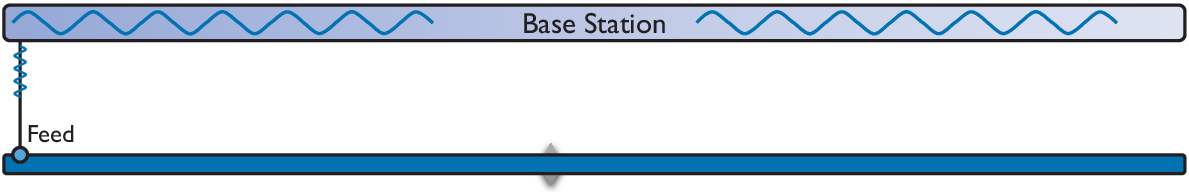}
	   \label{Figure: PASS_System_Model1}
    }
\caption{Illustration of conventional PASS.}
\label{Figure: Conventional_PASS_System_Model}
\vspace{-10pt}
\end{figure}

The studies reviewed above establish an initial theoretical foundation for PASS and demonstrate its advantages in low propagation loss and high flexibility for improving communication throughput and sensing accuracy. Despite this progress, a substantial gap remains between theoretical analysis and practical deployment. The key advantage of PASS is its ability to facilitate stable LoS links through long monolithic dielectric waveguides, as illustrated in {\figurename} {\ref{Figure: Conventional_PASS_System_Model}}. In principle, such waveguides can be fabricated and extended to cover large service regions. However, a single long waveguide introduces three fundamental challenges. First, in-waveguide propagation loss increases with waveguide length \cite{xu2025pinching}. This loss can significantly degrade performance and may offset the gains obtained from reduced free-space path loss. Second, long-waveguide deployments raise serious maintainability concerns. A failure that occurs at an arbitrary location is difficult to localize, and the repair process often requires replacement of the entire structure. This feature results in high maintenance cost and limited scalability. Third, and most importantly, monolithic-waveguide PASS does not admit a tractable uplink signal model. PAs couple signals into the waveguide in a passive manner. During propagation toward the feed point, the signal collected by one antenna can re-radiate through other antennas. This inter-antenna radiation (IAR) effect complicates the uplink analysis and undermines tractability. As a result, most existing studies either neglect IAR in multi-PA settings \cite{tegos2024minimum} or focus on single-PA deployments \cite{zeng2025sum,zeng2025energyuplink}, which yields simplified models with limited fidelity.

To address these challenges, a segmented waveguide-enabled pinching-antenna system (SWAN) was proposed in \cite{ouyang2025uplink}. Unlike conventional PASS, which relies on a single monolithic waveguide, SWAN employs multiple short dielectric waveguide segments that are arranged end-to-end. These segments are not physically interconnected. Instead, each segment has an independent feed point for signal injection and extraction. The segment signals are forwarded to the base station (BS) through wired connections, such as optical fiber or low-loss coaxial cables; see {\figurename} {\ref{Figure: New_SWAN_System_Model}}. By activating at most one PA per segment, SWAN eliminates IAR and yields a tractable uplink multi-PA model. In addition, signal propagation and waveguide maintenance remain confined to individual segments. This feature reduces in-waveguide attenuation and lowers maintenance cost. These advantages have stimulated increasing research interest in SWAN \cite{jiang2025segmented,gu2025sum}.

\begin{figure}[!t]
\centering
    \subfigure[System setup.]
    {
        \includegraphics[width=0.45\textwidth]{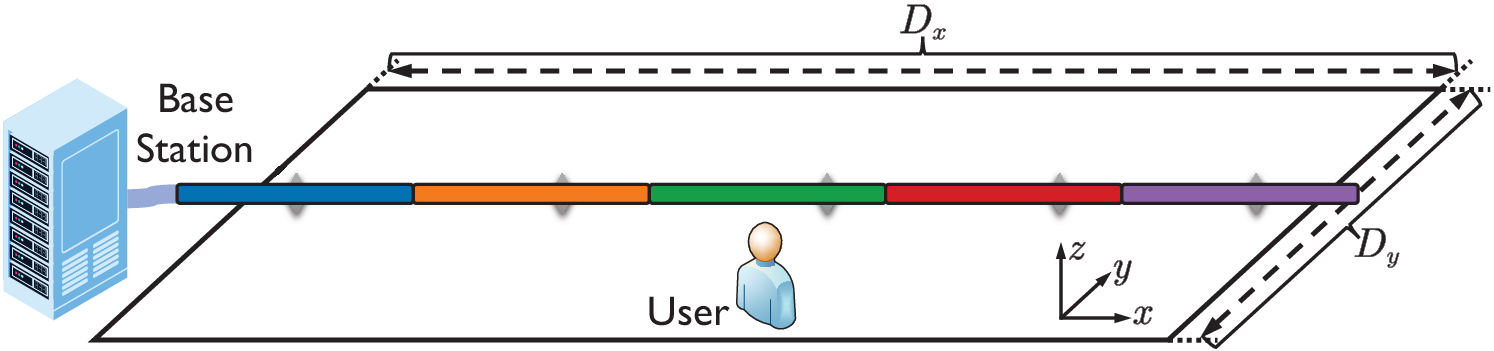}
	   \label{Figure: SWAN_System_Model}
    }
   \subfigure[Segmented waveguide.]
    {
        \includegraphics[width=0.45\textwidth]{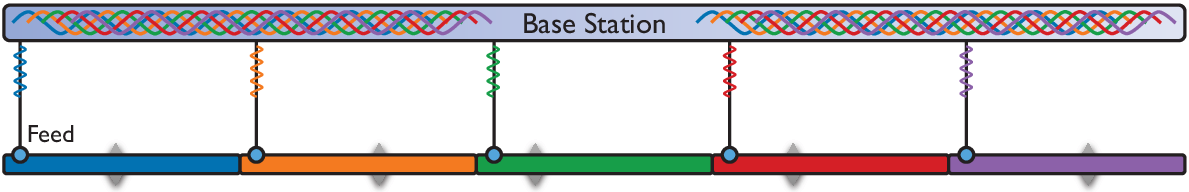}
	   \label{Figure: SWAN_System_Model1}
    }
\caption{Illustration of SWAN.}
\label{Figure: New_SWAN_System_Model}
\vspace{-10pt}
\end{figure}

\subsection{Motivation and Contributions}
Existing works have primarily emphasized the communication-theoretic benefits of SWAN, with particular attention to reduced in-waveguide propagation loss and the suppression of IAR \cite{ouyang2025uplink,jiang2025segmented,gu2025sum}. However, the potential of SWAN to improve system maintainability and reliability has so far been addressed mainly through qualitative arguments, and it lacks rigorous theoretical quantification. More broadly, an analytical framework that captures how maintainability affects the performance of PASS-based communications is still missing. In practical deployments, waveguide lifetimes are shaped by random failures that arise from hardware defects, material aging, and environmental conditions. Repair times also vary due to logistical constraints, technician availability, and replacement delays. These factors determine the long-term availability and reliability of PASS, and they can strongly influence communication performance. Without a rigorous analytical framework, it is difficult to quantify these effects or to assess the maintainability advantages of SWAN over conventional PASS.

These observations motivate this work. The objective is to establish a general analytical framework that captures how system maintainability affects PASS-based communications, and to use this framework to quantify the maintainability advantages of SWAN over conventional PASS in a rigorous manner. The main contributions are summarized as follows.
\begin{itemize}
\item We propose a unified analytical framework for maintainability and reliability evaluation in PASS. Specifically, we model the lifetime and repair time of a dielectric waveguide as exponentially distributed random variables, which are parameterized by a failure rate and a repair rate per unit waveguide length, respectively. Under this model, we describe the operational state of the waveguide by a two-state continuous-time Markov chain (CTMC) with a \emph{working} state and a \emph{failed} state. We derive the time-dependent transition probabilities of the CTMC and the steady-state probabilities that the waveguide is \emph{working} or \emph{failed}. We then incorporate the randomness of waveguide availability into the received signal-to-noise ratio (SNR), which yields a binary random transmission rate model. On this basis, we employ the probability of non-zero rate (PNR) and the OP to characterize maintainability and reliability in PASS-based communications.
\item We apply the proposed framework to conventional PASS that uses a single monolithic waveguide. We derive closed-form expressions for the achievable PNR and OP. To provide further insight, we characterize the scaling laws of the PNR and OP with respect to (w.r.t.) the waveguide length. The analysis shows that the PNR decays quadratically as the waveguide length increases, which reveals a fundamental maintainability limitation of monolithic-waveguide PASS over large service regions.    
\item We then analyze the maintainability of SWAN under two operational protocols, including \emph{segment switching (SS)} and \emph{segment aggregation (SA)}. For SS, we derive closed-form expressions for both PNR and OP, and show that SS-based SWAN achieves a higher PNR and a lower OP than conventional PASS, which demonstrates its improved maintainability. For SA, we derive a closed-form expression for the PNR, and an analytically tractable upper bound for the OP. We further analyze their scaling laws w.r.t. the number of segments, and prove that SA-based SWAN approaches its performance limits faster than SS-based SWAN, which yields superior maintainability and reliability performance.
\item We present numerical results to validate the theoretical analysis and demonstrate that: \romannumeral1) both SS-based SWAN and SA-based SWAN achieve higher PNR and lower OP than conventional PASS, and the performance gains increase with the service region size; \romannumeral2) SA-based SWAN outperforms SS-based SWAN in terms of both PNR and OP; and \romannumeral3) increasing the number of segments improves PNR and reduces OP under both SS and SA.
\end{itemize}

The remainder of this article is organized as follows. Section {\ref{Section: System Model}} introduces the system model for SWAN. Section \ref{Section: Maintenance of the Conventional PASS} presents an analytical framework for characterizing the maintainability of PASS-based communications. Section \ref{Section:Uplink SWAN} analyzes the maintainability of SWAN. Section \ref{Section_Numerical_Results} provides numerical results. Finally, Section \ref{Section_Conclusion} concludes the paper.
\subsubsection*{Notations}
Throughout this article, scalars and vectors are denoted by non-bold and bold letters, respectively. The transpose is denoted by $(\cdot)^{\mathsf{T}}$. The notations $\lvert a\rvert$ and $\lVert \mathbf{a} \rVert$ represent the magnitude of scalar $a$ and the norm of vector $\mathbf{a}$, respectively. The expectation and variance operators are denoted by ${\mathbbmss{E}}\{\cdot\}$ and ${\mathbbmss{V}}\{\cdot\}$, respectively. The sets $\mathbbmss{C}$ and $\mathbbmss{R}$ denote the complex and real spaces, respectively. The shorthand $[N]$ denotes the set $\{1,\ldots, N\}$. The ceiling operator is denoted by $\lceil\cdot\rceil$. The notation ${\mathcal{CN}}(\mu,\sigma^2)$ refers to a circularly symmetric complex Gaussian distribution with mean $\mu$ and variance $\sigma^2$, and ${\rm{Exp}}(\lambda)$ denotes an exponential distribution with rate parameter $\lambda$. Finally, ${\mathcal{O}}(\cdot)$ and $o(\cdot)$ denote the standard big-O and little-o notations, respectively. 
\section{System Model}\label{Section: System Model}
Consider an uplink system in which a BS uses a pinched dielectric waveguide to serve a single-antenna user, as illustrated in {\figurename} {\ref{Figure: Conventional_PASS_System_Model}}. The user is distributed over a rectangular service region with dimensions $D_x$ and $D_y$ along the $x$- and $y$-axes, respectively. The user location is denoted by ${\mathbf{u}}\triangleq[u_{x},u_{y},0]^{\mathsf{T}}$. The waveguide extends along the $x$-axis and spans the entire horizontal range of the service region. PAs are deployed along the waveguide to receive uplink signals \cite{suzuki2022pinching}. 
\subsection{Conventional PASS}
We first consider the conventional PASS architecture that uses a monolithic waveguide, which is a single long continuous structure, as shown in {\figurename} {\ref{Figure: PASS_System_Model}}. 
\subsubsection{System Architecture}
When multiple PAs are deployed along the waveguide to receive uplink signals, the signal collected by one PA can re-radiate through other PAs during propagation toward the feed point. This IAR effect leads to a mathematically intractable uplink model, and a tractable multi-PA model is not available for analysis. For this reason, we consider a single PA on the monolithic waveguide. The PA location is denoted by ${\bm\psi}^{\rm{M}}\triangleq[\psi^{\rm{M}},0,d]^{\mathsf{T}}$, where $d$ is the deployment height of the waveguide. The feed point lies at the front-left end of the waveguide, as shown in {\figurename} {\ref{Figure: PASS_System_Model1}}. Its location is ${\bm\psi}_0^{\rm{M}}\triangleq[\psi_0^{\rm{M}},0,d]^{\mathsf{T}}$ with $\psi_0^{\rm{M}}\leq \psi^{\rm{M}}$.
\subsubsection{Channel Model}
PASS targets high-frequency bands \cite{suzuki2022pinching}, where LoS propagation typically dominates \cite{ouyang2024primer}. We therefore adopt a free-space LoS model for the spatial channel coefficient between the PA and the user \cite{ouyang2024primer}:
\begin{align}
h_{\rm{o}}({\bm\psi}^{\rm{M}},{\mathbf{u}})\triangleq
\frac{\eta^{\frac{1}{2}}{\rm{e}}^{-{\rm{j}}k_0\lVert{\bm\psi}^{\rm{M}}-{\mathbf{u}}\rVert}}{\lVert{\bm\psi}^{\rm{M}}-{\mathbf{u}}\rVert},
\end{align}
where $\eta\triangleq\frac{c^2}{16\pi^2f_{\rm{c}}^2}$, $f_{\rm{c}}$ is the carrier frequency, $c$ is the speed of light, $k_0\triangleq\frac{2\pi}{\lambda}$ is the wavenumber, and $\lambda$ is the free-space wavelength. The in-waveguide propagation coefficient between the feed point and the PA is given by \cite{pozar2021microwave}:
\begin{align}\label{In_Waveguide_Channel_Model}
h_{\rm{i}}({\bm\psi}_{0}^{\rm{M}},{\bm\psi}^{\rm{M}})\triangleq{{\rm{e}}^{-\frac{2\pi}{\lambda_{\rm{g}}}\lVert{\bm\psi}_{0}^{\rm{M}}-{\bm\psi}^{\rm{M}}\rVert}}
={{\rm{e}}^{-\frac{2\pi}{\lambda_{\rm{g}}}\lvert{\psi}_{0}^{\rm{M}}-{\psi}^{\rm{M}}\rvert}},
\end{align}
where $\lambda_{\rm{g}}\triangleq\frac{\lambda}{n_{\rm{eff}}}$ is the guided wavelength and $n_{\rm{eff}}$ is the effective refractive index of the dielectric waveguide \cite{pozar2021microwave}. The propagation attenuation inside the waveguide is not included in \eqref{In_Waveguide_Channel_Model}. This assumption yields an upper bound on the performance of conventional PASS \cite{ding2024flexible}.
\subsubsection{Signal Model}
Let $s\sim{\mathcal{CN}}(0,1)$ denote the normalized data symbol transmitted by the user. The received signal at the feed point of the monolithic waveguide is given by
\begin{align}\label{Uplink_PASS_Basic_Model}
y_{\rm{M}}=h_{\rm{i}}({\bm\psi}_{0}^{\rm{M}},{\bm\psi}^{\rm{M}})h_{\rm{o}}({\bm\psi}^{\rm{M}},{\mathbf{u}})\sqrt{P}s+n_{\rm{M}},
\end{align}
where $P$ is the transmit power and $n_{\rm{M}}\sim{\mathcal{CN}}(0,\sigma^2)$ represents additive white Gaussian noise (AWGN) with variance $\sigma^2$. The resulting received SNR can be written as follows:
\begin{subequations}\label{Conventional_PASS_SNR_Basic}
\begin{align}
\gamma_{\rm{M}}&=\frac{P}{\sigma^2}\lvert h_{\rm{i}}({\bm\psi}_{0}^{\rm{M}},{\bm\psi}^{\rm{M}})h_{\rm{o}}({\bm\psi}^{\rm{M}},{\mathbf{u}})\rvert^2
=\frac{P\eta}{\sigma^2\lVert{\bm\psi}^{\rm{M}}-{\mathbf{u}}\rVert^2}\\
&=\frac{P\eta}{\sigma^2((\psi^{\rm{M}}-u_x)^2+c_y)},
\end{align}
\end{subequations}
where $c_y\triangleq u_y^2+(d-u_z)^2$. The SNR is maximized at $\psi^{\rm{M}}=u_x$, i.e., when the PA aligns with the projection of the user onto the waveguide, and the maximum SNR is $\gamma_{\rm{M}}=\frac{P\eta}{\sigma^2 c_y}$.
\subsection{SWAN}
\subsubsection{System Architecture}
SWAN addresses the IAR limitation of conventional PASS. SWAN replaces the monolithic waveguide with a segmented structure that consists of multiple short waveguide segments arranged end-to-end. Each segment has a dedicated feed point for signal injection and extraction. SWAN activates at most one PA in each segment, and this design yields a multi-PA uplink model without IAR. The segment outputs are forwarded to the BS through wired links, such as optical fibers or low-loss coaxial cables, as illustrated in {\figurename} {\ref{Figure: New_SWAN_System_Model}}. Since these wired links incur negligible attenuation compared with the dielectric segments, they are modeled as lossless. Furthermore, prior studies indicate that in-waveguide propagation loss within each short segment is negligible under typical SWAN parameters \cite{ouyang2025uplink}, and thus this impact can be safely discarded in the subsequent analyses.

The segmented waveguide consists of $M$ segments, each of length $L$, such that $D_x=LM$. When $M=1$, the segmented structure reduces to a single-waveguide architecture. Let ${\bm\psi}_{0}^{m}\triangleq[\psi_{0}^{m},0,d]^{\mathsf{T}}$ denote the location of the feed point of the $m$th segment, where $\psi_{0}^{1}<\psi_{0}^{2}<\ldots<\psi_{0}^{M}$. The feed point lies at the front-left end of each segment. The location of the PA in the $m$th segment (i.e., the $m$th PA) is denoted by ${\bm\psi}^{m}\triangleq[\psi^{m},0,d]^{\mathsf{T}}$, subject to the following constraints: 
\begin{align}
\psi_{0}^{m}\leq \psi^{m}\leq \psi_{0}^{m}+L,\lvert\psi^{m}-\psi^{m'}\rvert\geq \Delta,\forall m\ne m',
\end{align}
where $\Delta>0$ is the minimum inter-antenna spacing used to suppress mutual coupling \cite{ivrlavc2010toward}.

\begin{figure}[!t]
\centering
    \subfigure[Segment selection.]
    {
        \includegraphics[height=0.11\textwidth]{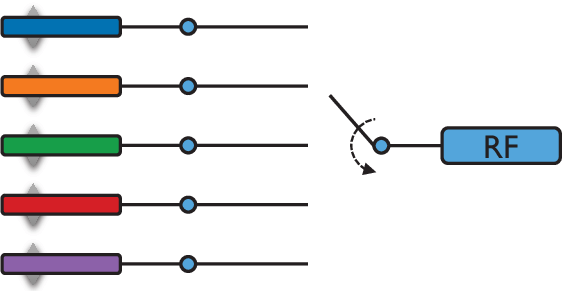}
	   \label{Figure_PAN_Protocol1}
    }
   \subfigure[Segment aggregation.]
    {
        \includegraphics[height=0.11\textwidth]{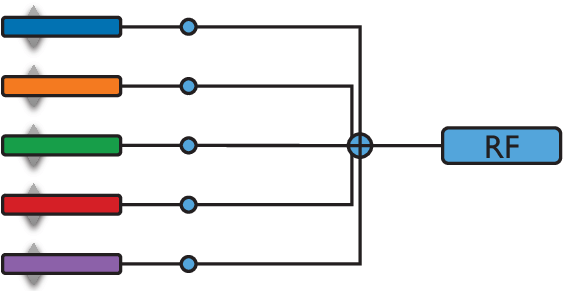}
	   \label{Figure_PAN_Protocol2}
    }
\caption{Illustration of two protocols for operating SWANs.}
\label{Figure: PAN_Protocol}
\vspace{-10pt}
\end{figure}

\subsubsection{Signal Model}\label{System_Model_Uplink_SWAN}
The received signal at the feed point of the $m$th segment is given by
\begin{align}\label{Uplink_PASS_Basic_Model}
y_{m}=h_{\rm{i}}({\bm\psi}_{0}^{m},{\bm\psi}^{m})h_{\rm{o}}({\bm\psi}^{m},{\mathbf{u}})\sqrt{P}s+n_m,
\end{align}
where $P$ denotes the transmit power, and $n_m\sim{\mathcal{CN}}(0,\sigma^2)$ represents AWGN with variance $\sigma^2$. 

Referring to {\figurename} {\ref{Figure: SWAN_System_Model1}}, the performance of SWAN depends on the manner in which the segment feed points connect to the BS radio-frequency (RF) front-end. Two fundamental operating protocols are considered \cite{ouyang2025uplink}: \romannumeral1) \emph{SS (segment selection)} and \romannumeral2) \emph{SA (segment aggregation)}, as illustrated in {\figurename} {\ref{Figure: PAN_Protocol}}.
\begin{itemize}
  \item \emph{Segment Selection:} {\figurename} {\ref{Figure_PAN_Protocol1}} shows that only one segment connects to the RF chain at any time. This protocol relies on a switching network and entails low hardware complexity. Let ${\bar{m}}\in[M]$ denote the index of the selected segment. The received uplink signal is given by
\begin{align}\label{Uplink_PASS_Basic_Model_SS}
y_{\rm{SS}}=y_{{\bar{m}}}=h_{\rm{i}}({\bm\psi}_{0}^{{\bar{m}}},{\bm\psi}^{{\bar{m}}})h_{\rm{o}}({\bm\psi}^{{\bar{m}}},{\mathbf{u}})
\sqrt{P}s+n_{{\bar{m}}}.
\end{align}
  \item \emph{Segment Aggregation:} {\figurename} {\ref{Figure_PAN_Protocol2}} shows that all $M$ feed points connect to a single RF chain through a power splitter. The signals extracted from all segments are aggregated and forwarded for baseband processing. The aggregated received signal is $y_{\rm{SA}}=\sum_{m=1}^{M}y_{m}$. It follows from \eqref{Uplink_PASS_Basic_Model} that
\begin{equation}\label{Uplink_PASS_Basic_Model_SA}
y_{\rm{SA}}=\sum_{m=1}^{M}h_{\rm{i}}({\bm\psi}_{0}^{m},{\bm\psi}^{m})
h_{\rm{o}}({\bm\psi}^{m},{\mathbf{u}})\sqrt{P}s+{n_{\rm{SA}}},
\end{equation}
where $n_{\rm{SA}}\triangleq\sum_{m=1}^{M}n_m\sim{\mathcal{CN}}(0,M\sigma^2)$ denotes the aggregated noise. Since all segments contribute to the received signal, SA can achieve higher performance than SS, at the cost of higher hardware complexity for signal aggregation.
\end{itemize}
\section{Maintainability of Conventional PASS}\label{Section: Maintenance of the Conventional PASS}
After establishing the system model, we analyze the maintainability of the conventional PASS architecture depicted in {\figurename} {\ref{Figure: Conventional_PASS_System_Model}}.
\subsection{Reliability Model of the Waveguide}
\subsubsection{Failure Model}
The waveguide is assumed to experience random failures caused by hardware defects, material degradation, or environmental factors. To capture this stochastic behavior, we model the waveguide lifetime $T_{\rm{M}}^1$ as a random variable. The instantaneous intensity of failure at time $t$, which is the probability per unit time that a failure occurs at time $t$ given that the waveguide remains operational up to $t$, is defined by the following \emph{hazard rate} \cite{trivedi2001probability}:
\begin{align}
\lambda_{\rm{M}}(t)=\lim_{\Delta t\rightarrow0}\frac{\Pr(\left.t\leq T_{\rm{M}}^1\leq t +\Delta t\right\rvert T_{\rm{M}}^1\geq t)}{\Delta t}.
\end{align}
Accordingly, the probability that the waveguide fails within a small time interval $[t,t+\Delta t]$, conditioned on its survival up to time $t$, is given by
\begin{equation}\label{Definition_Failture_Rate}
\begin{split}
&\Pr(\left.{\text{fail~in~}}[t,t+\Delta t]\right\rvert{\text{still~working~at~time~}}t)\\&=
\Pr(\left.t\leq T_{\rm{M}}^1\leq t +\Delta t\right\rvert T_{\rm{M}}^1\geq t)\approx \lambda_{\rm{M}}(t) \Delta t.
\end{split}
\end{equation}
For analytical tractability, $T_{\rm{M}}^1$ is modeled as an exponentially distributed random variable $T_{\rm{M}}^1\sim{\rm{Exp}}(\lambda_{\rm{M}})$ with a constant failure rate $\lambda_{\rm{M}}(t)=\lambda_{\rm{M}}$ (failures per unit time). The cumulative distribution function (CDF) is given by
\begin{align}
\Pr(T_{\rm{M}}^1\leq t)=1-{\rm{e}}^{-\lambda_M t},~t\geq0,
\end{align}
and the probability density function (PDF) is given by
\begin{align}
p_{{\rm{M}}}(t)\triangleq\frac{{\rm{d}}}{{\rm{d}}t}\Pr(T_{\rm{M}}^1\leq t)=\lambda_{\rm{M}}{\rm{e}}^{-\lambda_M t},~t\geq0.
\end{align} 
The \emph{mean time-to-failure (MTTF)} is then obtained as follows:
\begin{align}
{\rm{MTTF}}_{\rm{M}}\triangleq
\int_{0}^{\infty}t\lambda_{\rm{M}}{\rm{e}}^{-\lambda_M t}{\rm{d}}t=\frac{1}{\lambda_{\rm{M}}}.
\end{align}
The exponential lifetime assumption is widely used in reliability engineering because it yields a \emph{memoryless} model, meaning that the residual lifetime of a waveguide is independent of its operational duration. Specifically,
\begin{subequations}
\begin{align}
&\Pr(\left.t\leq T_{\rm{M}}^1\leq t +\Delta t\right\rvert T_{\rm{M}}^1\geq t)\\
&=\frac{\Pr(t\leq T_{\rm{M}}^1\leq t +\Delta t)}{\Pr(T_{\rm{M}}^1\geq t)}\\
&=\frac{\Pr(T_{\rm{M}}^1\leq t +\Delta t)-\Pr(T_{\rm{M}}^1\leq t)}{\Pr(T_{\rm{M}}^1\geq t)}\\
&=\frac{(1-{\rm{e}}^{-\lambda_M (t +\Delta t)})-(1-{\rm{e}}^{-\lambda_M t})}{{\rm{e}}^{-\lambda_M t}}\\
&=1-{\rm{e}}^{-\lambda_M \Delta t}=\Pr(T_{\rm{M}}^1\geq \Delta t).
\end{align}
\end{subequations}
This property is appropriate when failures occur randomly over time without aging effects \cite{trivedi2001probability}. 

The failure rate $\lambda_{\rm{M}}$ is assumed to scale linearly with the physical length of the waveguide. This assumption is consistent with classical reliability models, where the hazard rate increases with the component size that is exposed to potential failures \cite{trivedi2001probability}. Let $\lambda_0$ denote the failure rate per unit length. The failure rate of a monolithic waveguide of length $D_x$ is then given by
\begin{align}\label{Failture_Rate_Conventional_PASS}
\lambda_{\rm{M}}=\lambda_0 D_x.
\end{align}
This linear scaling also follows from the exponential lifetime model. Consider a monolithic waveguide of integer length $D_x$, which can be viewed as a cascade of $D_x$ interconnected unit-length sub-waveguides. Let $T_{k}^{\rm{M}}\sim{\rm{Exp}}(\lambda_0)$ denote the lifetime of the $k$th sub-waveguide for $k\in[D_x]$. The waveguide becomes unavailable once one of the sub-waveguides fails, so the lifetime of the monolithic waveguide is limited by the shortest sub-waveguide lifetime, i.e.,
\begin{align}
T_{\rm{M}}^1=\min\{T_{1}^{\rm{M}},\ldots,T_{D_x}^{\rm{M}}\}.
\end{align}
By the exponential race property \cite{ross1995stochastic}, the minimum of independent exponential random variables remains exponential, and its rate equals the sum of the individual rates. This property yields $T_{\rm{M}}^1\sim{\rm{Exp}}(\sum_{k=1}^{D_x}\lambda_0)={\rm{Exp}}(\lambda_{\rm{M}})$, which confirms that $\lambda_{\rm{M}}=\sum_{k=1}^{D_x}\lambda_0=\lambda_0 D_x$.
\subsubsection{Repair Model}
After a failure, the waveguide is assumed to be repairable. Due to logistical constraints, technician availability, and replacement delays, the repair time $T_{\rm{M}}^{0}$ is modeled as a random variable. Following standard practice in reliability theory \cite{trivedi2001probability}, $T_{\rm{M}}^{0}$ is assumed to be exponentially distributed with rate $\mu_{\rm{M}}$, i.e., $T_{\rm{M}}^{0}\sim{\rm{Exp}}(\mu_{\rm{M}})$. The \emph{mean time-to-repair (MTTR)} is therefore given by
\begin{align}
{\rm{MTTR}}_{\rm{M}}\triangleq
\int_{0}^{\infty}t\mu_{\rm{M}}{\rm{e}}^{-{\mu_M} t}{\rm{d}}t=\frac{1}{\mu_{\rm{M}}}.
\end{align}
This exponential assumption supports tractable analysis and underpins classical Markov reliability models \cite{ross1995stochastic}. 

In practical deployments, repair time accounts for fault detection, isolation, component replacement, and recalibration. These operations typically scale with the physical size of the waveguide. A simple model assumes that the MTTR increases linearly with waveguide length, so that 
\begin{align}
{\rm{MTTR}}_{\rm{M}}=\frac{1}{\mu_{\rm{M}}}=\frac{D_x}{\mu_0}, 
\end{align}
where $\frac{1}{\mu_0}$ denotes the MTTR per unit length, or equivalently, $\mu_0$ represents the repair rate per unit length. Therefore, the repair rate of the entire waveguide is given by
\begin{align}\label{Repair_Rate_Conventional_PASS}
\mu_{\rm{M}}=\frac{1}{{\rm{MTTR}}_{\rm{M}}}=\frac{\mu_0}{D_x}.
\end{align}

\begin{figure}[!t]
\centering
\includegraphics[width=0.3\textwidth]{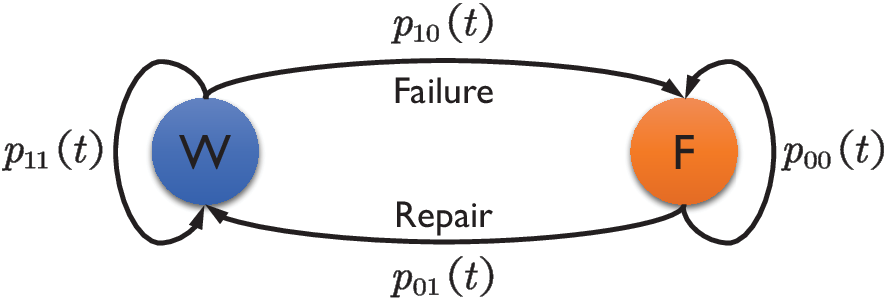}
\caption{Illustration of the CTMC, where ``W'' and ``F'' refer to ``working'' and ``failed'', respectively.}
\label{Figure_Markov}
\vspace{-15pt}
\end{figure}

\subsubsection{Markov Reliability Model}
The operational state of each waveguide can be modeled as a CTMC with two states:
\begin{itemize}
  \item \emph{State 1 (Working):} The waveguide is operational.
  \item \emph{State 0 (Failed):} The waveguide is under failure or repair.
\end{itemize}
The time-dependent transition probabilities of this CTMC are defined as follows:
\begin{subequations}
\begin{align}
&p_{11}(t)\triangleq\Pr(\left.{\text{Working~at~}}t\right|{\text{Working~at~}}0),\\
&p_{10}(t)\triangleq\Pr(\left.{\text{Failed~at~}}t\right|{\text{Working~at~}}0)=1-p_{11}(t),\\
&p_{00}(t)\triangleq\Pr(\left.{\text{Failed~at~}}t\right|{\text{Failed~at~}}0),\\
&p_{01}(t)\triangleq\Pr(\left.{\text{Working~at~}}t\right|{\text{Failed~at~}}0)=1-p_{00}(t).
\end{align}
\end{subequations}
This state transition process is illustrated in {\figurename} {\ref{Figure_Markov}}. 
\vspace{-5pt}
\begin{lemma}\label{Lemma_Waveguide_CTMC_Transition_Probabilities}
For failure rate $\lambda_{\rm{M}}$ and repair rate $\mu_{\rm{M}}$, the CTMC transition probabilities for the monolithic waveguide can be expressed as follows:
\begin{subequations}
\begin{align}
&p_{11}(t)=\frac{\mu_{\rm{M}}}{\lambda_{\rm{M}}+\mu_{\rm{M}}}+\frac{\lambda_{\rm{M}}}{\lambda_{\rm{M}}+\mu_{\rm{M}}}{\rm{e}}^{-(\lambda_{\rm{M}}+\mu_{\rm{M}})t},
\label{Waveguide_CTMC_Transition_Probability1}\\
&p_{10}(t)=\frac{\lambda_{\rm{M}}}{\lambda_{\rm{M}}+\mu_{\rm{M}}}-\frac{\lambda_{\rm{M}}}{\lambda_{\rm{M}}+\mu_{\rm{M}}}{\rm{e}}^{-(\lambda_{\rm{M}}+\mu_{\rm{M}})t},
\label{Waveguide_CTMC_Transition_Probability2}\\
&p_{00}(t)=\frac{\lambda_{\rm{M}}}{\lambda_{\rm{M}}+\mu_{\rm{M}}}+\frac{\mu_{\rm{M}}}{\lambda_{\rm{M}}+\mu_{\rm{M}}}{\rm{e}}^{-(\lambda_{\rm{M}}+\mu_{\rm{M}})t},
\label{Waveguide_CTMC_Transition_Probability3}\\
&p_{01}(t)=\frac{\mu_{\rm{M}}}{\lambda_{\rm{M}}+\mu_{\rm{M}}}-\frac{\mu_{\rm{M}}}{\lambda_{\rm{M}}+\mu_{\rm{M}}}{\rm{e}}^{-(\lambda_{\rm{M}}+\mu_{\rm{M}})t}.
\label{Waveguide_CTMC_Transition_Probability4}
\end{align}
\end{subequations}
\end{lemma}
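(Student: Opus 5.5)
The plan is to derive the Kolmogorov forward equations for the two-state CTMC and solve the resulting scalar linear ODE. Using the memoryless exponential lifetime and repair times, the small-interval behaviour is as follows. Over $[t,t+\Delta t]$, a working waveguide fails with probability $\lambda_{\rm{M}}\Delta t+o(\Delta t)$, in line with \eqref{Definition_Failture_Rate}. A failed waveguide is repaired with probability $\mu_{\rm{M}}\Delta t+o(\Delta t)$, by the same hazard-rate computation applied to $T_{\rm{M}}^0\sim{\rm{Exp}}(\mu_{\rm{M}})$. Two or more transitions occur with probability $o(\Delta t)$. Conditioning on the state at time $t$ (Chapman--Kolmogorov), for the chain started in the working state,
\begin{align}
p_{11}(t+\Delta t)=p_{11}(t)(1-\lambda_{\rm{M}}\Delta t)+p_{10}(t)\mu_{\rm{M}}\Delta t+o(\Delta t).
\end{align}
Substituting $p_{10}(t)=1-p_{11}(t)$, dividing by $\Delta t$ and letting $\Delta t\to0$ gives
\begin{align}
p_{11}'(t)=\mu_{\rm{M}}-(\lambda_{\rm{M}}+\mu_{\rm{M}})p_{11}(t),\qquad p_{11}(0)=1.
\end{align}

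This is a first-order linear ODE with constant coefficients. Its equilibrium is $\frac{\mu_{\rm{M}}}{\lambda_{\rm{M}}+\mu_{\rm{M}}}$, and the deviation from equilibrium decays as ${\rm{e}}^{-(\lambda_{\rm{M}}+\mu_{\rm{M}})t}$ (integrating factor). Matching the initial condition fixes the coefficient at $1-\frac{\mu_{\rm{M}}}{\lambda_{\rm{M}}+\mu_{\rm{M}}}=\frac{\lambda_{\rm{M}}}{\lambda_{\rm{M}}+\mu_{\rm{M}}}$, which yields \eqref{Waveguide_CTMC_Transition_Probability1}. Then \eqref{Waveguide_CTMC_Transition_Probability2} follows from $p_{10}=1-p_{11}$.

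The failed-initial case is symmetric. Starting from state 0, the same argument gives
\begin{align}
p_{00}'(t)=\lambda_{\rm{M}}-(\lambda_{\rm{M}}+\mu_{\rm{M}})p_{00}(t),\qquad p_{00}(0)=1,
\end{align}
which is the previous ODE with $\lambda_{\rm{M}}$ and $\mu_{\rm{M}}$ interchanged. Hence $p_{00}(t)$ has the stated form \eqref{Waveguide_CTMC_Transition_Probability3}, and \eqref{Waveguide_CTMC_Transition_Probability4} follows from $p_{01}=1-p_{00}$.

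Alternatively, one can write the generator matrix
\begin{align}
\mathbf{Q}=\begin{bmatrix}-\mu_{\rm{M}}&\mu_{\rm{M}}\\ \lambda_{\rm{M}}&-\lambda_{\rm{M}}\end{bmatrix}
\end{align}
with state ordering $(0,1)$, and compute $\mathbf{P}(t)={\rm{e}}^{\mathbf{Q}t}$. Since $\mathbf{Q}^2=-(\lambda_{\rm{M}}+\mu_{\rm{M}})\mathbf{Q}$, the exponential collapses to
\begin{align}
\mathbf{P}(t)=\mathbf{I}+\frac{1-{\rm{e}}^{-(\lambda_{\rm{M}}+\mu_{\rm{M}})t}}{\lambda_{\rm{M}}+\mu_{\rm{M}}}\mathbf{Q},
\end{align}
and reading off the four entries gives all of \eqref{Waveguide_CTMC_Transition_Probability1}--\eqref{Waveguide_CTMC_Transition_Probability4} at once. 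This could serve as a cross-check.

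The only step needing care is justifying the infinitesimal transition probabilities, in particular that multiple transitions within $\Delta t$ contribute only $o(\Delta t)$. This follows from the exponential holding times and memorylessness established earlier in the paper, and the rest is routine ODE solving.
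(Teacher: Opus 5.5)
Your proposal is correct and follows the paper's proof essentially verbatim: you condition on the state at time $t$ to get $p_{11}'(t)=\mu_{\rm{M}}-(\lambda_{\rm{M}}+\mu_{\rm{M}})p_{11}(t)$ with $p_{11}(0)=1$, take complements for $p_{10}$, and use the $\lambda_{\rm{M}}\leftrightarrow\mu_{\rm{M}}$ symmetric equation for $p_{00}$, which you write out explicitly where the paper only invokes symmetry. Your matrix-exponential cross-check is also valid: $\mathbf{Q}^2=-(\lambda_{\rm{M}}+\mu_{\rm{M}})\mathbf{Q}$ does collapse ${\rm{e}}^{\mathbf{Q}t}$ to the stated closed form, and its entries reproduce all four expressions.
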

\vspace{-5pt}
\begin{IEEEproof}
Please refer to Appendix \ref{Proof_Lemma_Waveguide_CTMC_Transition_Probabilities} for more details.
\end{IEEEproof}
The steady-state probabilities of the waveguide being \emph{working} or \emph{failed} are obtained as follows:
\begin{subequations}\label{Staedy_Prob_Basic_Model}
\begin{align}
&P_{\rm{M}}^{1}=\lim_{t\rightarrow\infty}p_{11}(t)=\lim_{t\rightarrow\infty}p_{01}(t)=\frac{\mu_{\rm{M}}}{\lambda_{\rm{M}}+\mu_{\rm{M}}},\\
&P_{\rm{M}}^{0}=\lim_{t\rightarrow\infty}p_{10}(t)=\lim_{t\rightarrow\infty}p_{00}(t)=\frac{\lambda_{\rm{M}}}{\lambda_{\rm{M}}+\mu_{\rm{M}}}.
\end{align}
\end{subequations}
These steady-state values quantify the long-term reliability and downtime ratio of the waveguide. It is worth noting that the working-state probability can also be expressed as $P_{\rm{M}}^{1}=\frac{1/\lambda_{\rm{M}}}{1/\lambda_{\rm{M}}+1/\mu_{\rm{M}}}=\frac{{\rm{MTTF}}_{\rm{M}}}{{\rm{MTTF}}_{\rm{M}}+{\rm{MTTR}}_{\rm{M}}}$.
\vspace{-5pt}
\begin{remark}
The above relationship indicates that the steady-state probability $P_{\rm{M}}^{1}$ represents the fraction of time the system remains operational, which aligns with intuition.
\end{remark}
\vspace{-5pt}
Substituting \eqref{Failture_Rate_Conventional_PASS} and \eqref{Repair_Rate_Conventional_PASS} into \eqref{Staedy_Prob_Basic_Model} yields
\begin{align}\label{Steady_Prob_Refined}
P_{\rm{M}}^{1}=\frac{1}{\frac{\lambda_{\rm{M}}}{\mu_{\rm{M}}}+1}=\frac{1}{\varepsilon_0 D_x^2+1},~P_{\rm{M}}^{0}=1-\frac{1}{\varepsilon_0 D_x^2+1},
\end{align}
where $\varepsilon_0\triangleq\frac{\lambda_0}{\mu_0}$ is referred to as the \emph{failure-repair rate ratio (F3R)}. Equation \eqref{Steady_Prob_Refined} shows that the steady-state probabilities of the waveguide being working or failed depend on the ratio $\varepsilon_0$, rather than on the individual values of the unit-length failure rate $\lambda_0$ and the unit-length repair rate $\mu_0$.
\vspace{-5pt}
\begin{remark}\label{Remark_Steady_State_Probability}
The result in \eqref{Steady_Prob_Refined} admits a clear physical interpretation. The long-term operational state of a waveguide depends jointly on how often failures occur and how quickly repairs can be completed. A small failure rate alone does not guarantee high availability if the repair process is slow, and a high repair rate alone cannot compensate for frequent failures. Based on \eqref{Steady_Prob_Refined}, the steady-state operational probability is determined by the ratio between the failure and repair rates, rather than by their absolute magnitudes. A smaller failure rate $\lambda_0$, together with a larger repair rate $\mu_0$, leads to a higher probability that the waveguide remains in the working state.
\end{remark}
\vspace{-5pt}
Two limiting regimes are of particular interest. When $\varepsilon_0\rightarrow0$, the waveguide operates in a \emph{highly reliable} environment, where failures are rare and repairs are rapid. In this case, $P_{\rm{M}}^{1}\rightarrow1$ and $P_{\rm{M}}^{0}\rightarrow0$, which indicates that the waveguide is \emph{almost always operational and seldom experiences failure}. In contrast, when $\varepsilon_0\rightarrow\infty$, the waveguide operates in a \emph{highly unreliable} environment, where failures occur frequently and repairs are slow. Under this condition, $P_{\rm{M}}^{1}\rightarrow0$ and $P_{\rm{M}}^{0}\rightarrow1$, which implies that the waveguide is \emph{almost always in the failed state and rarely operational}.
\subsection{Performance of Conventional PASS}\label{Section:Covnetional_PASS_Maintance}
After establishing the reliability model, we evaluate the performance of conventional PASS from a maintainability perspective. For analytical tractability, we consider a waveguide that has operated for a sufficiently long time, so that its state follows the steady-state distribution in \eqref{Staedy_Prob_Basic_Model}. 

From \eqref{Conventional_PASS_SNR_Basic}, the maximum received SNR is $\gamma_{\rm{M}}=\frac{P\eta}{\sigma^2 c_y}$. When waveguide availability is taken into account, the instantaneous user rate becomes a binary random variable as follows:
\begin{align}
{\mathcal{R}}_{\rm{M}}\triangleq\left\{\begin{array}{ll}
\log_2(1+\gamma_{\rm{M}})             & {\text{with~probability~}}P_{\rm{M}}^{1}\\
0           & {\text{with~probability~}}P_{\rm{M}}^{0}
\end{array}\right.,
\end{align}
where $P_{\rm{M}}^{1}$ and $P_{\rm{M}}^{0}$ denote the steady-state probabilities that the waveguide is working and failed, respectively. 

Since waveguide unreliability introduces randomness into the achievable rate, we adopt the PNR and the OP to characterize maintainability. The PNR is defined as follows:
\begin{align}\label{PNR_Conventional_PASS}
{\mathcal{P}}_{\rm{M}}^{+}\triangleq\Pr({\mathcal{R}}_{\rm{M}}>0)=P_{\rm{M}}^{1}.
\end{align}
The OP is defined as follows:
\begin{align}\label{OP_Conventional_PASS}
{\mathcal{P}}_{\rm{M}}\triangleq\Pr({\mathcal{R}}_{\rm{M}}<R_0)=\left\{\begin{array}{ll}
P_{\rm{M}}^{0}             & \gamma_{\rm{M}}\geq\tau\\
1           & \gamma_{\rm{M}}<\tau
\end{array}\right.,
\end{align}
where $R_0$ denotes the target data rate and $\tau\triangleq2^{R_0}-1$. \emph{A higher PNR or a lower OP corresponds to higher reliability and stronger maintainability.} For brevity, we consider the case where $\gamma_{\rm{M}}\geq\tau$, i.e., where $\log_2(1+\gamma_{\rm{M}})\geq R_0$. Substituting \eqref{Failture_Rate_Conventional_PASS} and \eqref{Repair_Rate_Conventional_PASS} into \eqref{PNR_Conventional_PASS} and \eqref{OP_Conventional_PASS} yields
\begin{align}
{\mathcal{P}}_{\rm{M}}^{+}&= P_{\rm{M}}^{1}=\frac{\mu_{\rm{M}}}{\lambda_{\rm{M}}+\mu_{\rm{M}}}=\frac{1}{\varepsilon_0 D_x^2+1},\label{PNR_Conventional_PASS_Further}\\
{\mathcal{P}}_{\rm{M}}&= P_{\rm{M}}^{0}=\frac{\lambda_{\rm{M}}}{\lambda_{\rm{M}}+\mu_{\rm{M}}}=1-\frac{1}{\varepsilon_0 D_x^2+1}.\label{OP_Conventional_PASS_Further}
\end{align}
It follows that ${\mathcal{P}}_{\rm{M}}^{+}\simeq{\mathcal{O}}(D_x^{-2})$ and ${\mathcal{P}}_{\rm{M}}\simeq1-{\mathcal{O}}(D_x^{-2})$. As $D_x\rightarrow\infty$, the limiting behavior is given by
\begin{align}\label{OP_Asym_D_x_Conventional_PASS}
\lim\nolimits_{D_x\rightarrow\infty}{\mathcal{P}}_{\rm{M}}^{+}=0,~
\lim\nolimits_{D_x\rightarrow\infty}{\mathcal{P}}_{\rm{M}}=1-0=1. 
\end{align}
This result indicates that, as the waveguide length or service region increases indefinitely, the communication link will \emph{almost certainly experience outage} due to the \emph{limited maintainability} of a long waveguide. Consequently, the effective long-term throughput of conventional PASS asymptotically approaches zero. Moreover, the PNR decays quadratically with the waveguide length, as ${\mathcal{P}}_{\rm{M}}^{+}\simeq{\mathcal{O}}(D_x^{-2})$.
\vspace{-5pt}
\begin{remark}
This degradation follows from the length-dependent failure and repair rates. A longer waveguide yields a higher failure rate and a lower repair rate, which reduces the MTTF and increases the MTTR. Longer structures are more exposed to random failures and require more effort to restore.
\end{remark}
\vspace{-5pt}
\vspace{-5pt}
\begin{remark}
Many existing analyses of conventional PASS emphasize the performance gains obtained from a larger service region or a longer waveguide, and they often omit reliability and maintainability. The above derivation shows that maintainability can dominate long-term performance and should be included in deployment-oriented PASS design.
\end{remark}
\vspace{-5pt}
\section{Maintainability of SWAN}\label{Section:Uplink SWAN}
After analyzing the conventional PASS architecture, we turn to the segmented-waveguide architecture illustrated in {\figurename} {\ref{Figure: New_SWAN_System_Model}}.
\subsection{Reliability Model of the Waveguide Segment}
For analytical tractability, the lifetimes and repair times of all waveguide segments are assumed to be independent and identically distributed (i.i.d.). Let $\lambda_{\rm{S}}$ and $\mu_{\rm{S}}$ denote the failure rate and repair rate of each segment, respectively. Based on the unit-length failure and repair rates $\lambda_0$ and $\mu_0$, the segment-level parameters are expressed as follows:
\begin{align}\label{Rate_Parameter_SWAN}
\lambda_{\rm{S}}=\lambda_0 L= \lambda_0 \frac{D_x}{M},\quad\mu_{\rm{S}}=\frac{\mu_0}{L}=\frac{M\mu_0}{D_x}.
\end{align}
Therefore, the MTTF and MTTR of each segment can be written as follows:
\begin{subequations}\label{Segment_MMTTF_MTTR}
\begin{align}
&{\rm{MTTF}}_{\rm{S}}=\frac{1}{\lambda_{\rm{S}}}= \frac{M}{D_x\lambda_0}=M\times{\rm{MTTF}}_{\rm{M}},\\
&{\rm{MTTR}}_{\rm{S}}=\frac{1}{\mu_{\rm{S}}}=\frac{D_x}{M\mu_0}=\frac{{\rm{MTTR}}_{\rm{M}}}{M}.
\end{align}
\end{subequations}
Since $M\geq1$, it follows that ${\rm{MTTF}}_{\rm{S}}\geq{\rm{MTTF}}_{\rm{M}}$ and ${\rm{MTTR}}_{\rm{S}}\leq{\rm{MTTR}}_{\rm{M}}$. This behavior is consistent with physical intuition. Short waveguide segments are less susceptible to random failures and can be repaired more quickly than a single long, monolithic waveguide. Let $P_{m}^1\triangleq\frac{\mu_{\rm{S}}}{\lambda_{\rm{S}}+\mu_{\rm{S}}}=\frac{1}{\varepsilon_0 L^2+1}$ and $P_{m}^0\triangleq\frac{\lambda_{\rm{S}}}{\lambda_{\rm{S}}+\mu_{\rm{S}}}=\frac{\varepsilon_0 L^2}{\varepsilon_0 L^2+1}$ denote the steady-state probabilities that the $m$th segment is working and failed, respectively, where $m\in[M]$. These expressions show that the steady-state reliability of each segment is also governed by the F3R $\varepsilon_0$, which is consistent with the interpretation in Remark \ref{Remark_Steady_State_Probability}.

In the following, we analyze the maintainability of the segmented structure under the SS and SA protocols.
\subsection{Segment Selection}\label{Section: Uplink SWAN: Segment Switching}
\subsubsection{Optimal Antenna Activation}
For the SS protocol, only one segment and its associated PA are activated. From \eqref{Uplink_PASS_Basic_Model_SS}, the maximum received SNR can be expressed as follows:
\begin{subequations}\label{SNR_SS_Uplink_Definition}
\begin{align}
\gamma_{\rm{SS}}&\triangleq\max_{\psi^{m}\in[\psi_{0}^{m},\psi_{0}^{m}+L]}
\frac{P\lvert h_{\rm{i}}({\bm\psi}_{0}^{{{m}}},{\bm\psi}^{{{m}}})h_{\rm{o}}({\bm\psi}^{{{m}}},{\mathbf{u}})\rvert^2}{\sigma^2}\\
&=\max_{\psi^{m}\in[\psi_{0}^{m},\psi_{0}^{m}+L]}
\frac{P\eta}{\sigma^2((u_x-{\psi}^{m})^2+c_y)}.
\end{align}
\end{subequations}
The optimal PA should be activated within the segment closest to the user, whose index is determined by
\begin{align}\label{SS_Optimal_Uplink}
m^{\star}=\left\lceil\frac{u_x-{\psi}_{0}^{1}}{L}\right\rceil.
\end{align}  
Substituting this into \eqref{SNR_SS_Uplink_Definition} gives
\begin{align}
\gamma_{\rm{SS}}=\max_{{\psi}^{m^{\star}}\in[\psi_{0}^{m^{\star}},\psi_{0}^{m^{\star}}+L]}
\frac{P\eta}{\sigma^2((u_x-{\psi}^{m^{\star}})^2+c_y)}.
\end{align}
The optimal PA location satisfies ${\psi}^{m^{\star}}=u_x$, and the maximum SNR is given by $\gamma_{\rm{SS}}=\frac{P\eta}{\sigma^2 c_y}$. Notice that $\gamma_{\rm{SS}}=\gamma_{\rm{M}}$.
\subsubsection{Performance of SWAN Under SS}
By incorporating the reliability model, the instantaneous user rate becomes a binary random variable as follows:
\begin{align}
{\mathcal{R}}_{\rm{SS}}\triangleq\left\{\begin{array}{ll}
\log_2(1+\gamma_{\rm{SS}})             & {\text{with~probability~}}P_{m^{\star}}^{1}\\
0           & {\text{with~probability~}}P_{m^{\star}}^{0}
\end{array}\right..
\end{align}
The PNR is therefore expressed as follows:
\begin{align}\label{SS_PNR_Expression}
{\mathcal{P}}_{\rm{SS}}^{+}\triangleq \Pr({\mathcal{R}}_{\rm{SS}}>0)=P_{m^{\star}}^{1}=1-\frac{1}{1+\frac{1}{L^2\varepsilon_0}},
\end{align}
where $L=\frac{D_x}{M}$. The corresponding OP is given by
\begin{align}\label{SS_OP_Expression}
{\mathcal{P}}_{\rm{SS}}\triangleq \Pr({\mathcal{R}}_{\rm{SS}}<R_0)=P_{m^{\star}}^{0}=\frac{1}{1+\frac{1}{L^2\varepsilon_{0}}},
\end{align}
where $\gamma_{\rm{SS}}\geq2^{R_0}-1$, i.e., $\log_2(1+\gamma_{\rm{SS}})\geq R_0$. 

Since ${\mathcal{P}}_{\rm{SS}}=1-{\mathcal{P}}_{\rm{SS}}^{+}$, it suffices to study ${\mathcal{P}}_{\rm{SS}}^{+}$. Inserting $L=\frac{D_x}{M}$ into \eqref{SS_PNR_Expression} gives ${\mathcal{P}}_{\rm{SS}}^{+}=\frac{M^2}{M^2+D_x^2\varepsilon_0}$. Differentiating ${\mathcal{P}}_{\rm{SS}}^{+}$ versus $M$ yields
\begin{align}\label{SS_OP_Deriviate}
\frac{\partial {\mathcal{P}}_{\rm{SS}}^{+}}{\partial M}=\frac{2\varepsilon_0D_x^2M}{(M^2+\varepsilon_0D_x^2)^2}>0.
\end{align}
Therefore, ${\mathcal{P}}_{\rm{SS}}^{+}$ increases with the number of segments $M$. Equivalently, the OP decreases as $M$ increases.
\vspace{-5pt}
\begin{remark}
The above results confirm that partitioning a long waveguide into a larger number of shorter segments improves the PNR of the PASS communication link. This is intuitive, as more segments reduce the per-segment failure rate and accelerate repair processes under SS, which enhances both the maintainability efficiency and the overall operational reliability of the system, as also indicated by \eqref{Segment_MMTTF_MTTR}.
\end{remark}
\vspace{-5pt}
Furthermore, taking the limit as $M\rightarrow\infty$ yields
\begin{subequations}\label{SS_OP_Asym}
\begin{align}
\lim\nolimits_{M\rightarrow\infty}{\mathcal{P}}_{\rm{SS}}^{+}&=\lim\nolimits_{M\rightarrow\infty}P_{m^{\star}}^{1}\simeq1-{\mathcal{O}}(M^{-2})\rightarrow1,\label{SS_PNR_Asym}\\
\lim\nolimits_{M\rightarrow\infty}{\mathcal{P}}_{\rm{SS}}&=\lim\nolimits_{M\rightarrow\infty}P_{m^{\star}}^{0}\simeq{\mathcal{O}}(M^{-2})\rightarrow0.\label{SS_OP_Asym}
\end{align}
\end{subequations}
This asymptotic behavior is physically meaningful. As the number of segments increases, each segment becomes shorter, which reduces the segment failure rate and increases the segment repair rate under the adopted length-scaling model. Therefore, the system remains operational for almost all time, with a working probability approaching one. In this regime, the impact of maintainability becomes negligible, and the achievable rate approaches the reliability-free limit $\log_2(1+\gamma_{\rm{SS}})$.

The preceding analysis considered a fixed service-region side length $D_x$. We now consider the case where the segment length $L=\frac{D_x}{M}$ is fixed. Under this setting, the PNR of SWAN under SS is independent of the service-region width $D_x$; see \eqref{SS_PNR_Expression}. This result follows from the fact that uplink reception under SS relies only on the selected segment, and its reliability is determined by the segment length $L$ rather than by the total waveguide length. In contrast, Section \ref{Section:Covnetional_PASS_Maintance} shows that the PNR of conventional PASS decreases monotonically with $D_x$ and converges to zero as $D_x\rightarrow\infty$; see \eqref{OP_Asym_D_x_Conventional_PASS}. This degradation occurs because conventional PASS relies on a single continuous waveguide whose failure rate increases with its physical length, while its repair rate decreases under the adopted length-scaling model.
\vspace{-5pt}
\begin{remark}
Under SS, SWAN performance depends only on the fixed segment length $L$, which is much smaller than the monolithic waveguide length $D_x$ in conventional PASS. The segmented structure reduces both the likelihood of failure and the cost associated with repair or replacement. SWAN therefore offers stronger maintainability robustness and sustains stable performance as the service region expands.
\end{remark}
\vspace{-5pt}
\subsubsection{Comparison with Conventional PASS}\label{Section: SS: Comparison with the Conventional PASS}
Since ${\mathcal{P}}_{\rm{M}}^{+}={\mathcal{P}}_{\rm{SS}}^{+}$ when $M=1$, it follows directly that ${\mathcal{P}}_{\rm{SS}}^{+}>{\mathcal{P}}_{\rm{M}}^{+}$ for all $M>1$. The relative PNR gain of SS-based SWAN over conventional PASS is defined as follows:
\begin{align}\label{SS_PNR_Performance_Gain}
{\mathcal{A}}_{\rm{SS}}\triangleq\frac{{\mathcal{P}}_{\rm{SS}}^{+}}{{\mathcal{P}}_{\rm{M}}^{+}}=
\frac{\varepsilon_0 {D_x^2}M^2+M^2}{\varepsilon_0 {D_x^2}+{M^2}}=
\frac{\varepsilon_0 {L^2M^2}+1}{\varepsilon_0 {L^2}+{1}}.
\end{align}
We first consider the case where the side length $D_x$ is fixed. Under this condition, the performance gain ${\mathcal{A}}_{\rm{SS}}$ increases with the number of segments $M$, since $\frac{\partial {\mathcal{A}}_{\rm{SS}}}{\partial M}=\frac{2M(1+\varepsilon_0D_x^2)\varepsilon_0D_x^2}{(\varepsilon_0D_x^2+M^2)^2}>0$. Moreover, $\lim_{M\rightarrow\infty}{\mathcal{A}}_{\rm{SS}}=\varepsilon_0 {D_x^2}+1$, which scales quadratically with $D_x$. Next, for a fixed $M$, differentiating ${\mathcal{A}}_{\rm{SS}}$ w.r.t. $D_x$ yields $\frac{\partial {\mathcal{A}}_{\rm{SS}}}{\partial D_x}
=\frac{2\varepsilon_0D_xM^2(M^2-1)}{(M^2+\varepsilon_0D_x^2)^2}>0$. 
\vspace{-5pt}
\begin{remark}\label{Remark_Performance_Gain_Side_Length}
The above arguments imply that the relative maintainability gain of SS-based SWAN over conventional PASS increases with the number of segments and with the size of the service region.
\end{remark}
\vspace{-5pt}
We next consider the case where the segment length $L$ is fixed. It follows from \eqref{SS_PNR_Performance_Gain} that ${\mathcal{A}}_{\rm{SS}}$ increases with the number of segments $M$. Moreover, ${\mathcal{A}}_{\rm{SS}}\simeq{\mathcal{O}}(M^2)$ as $M\rightarrow\infty$. This behavior indicates that the maintainability advantage of SS-based SWAN over conventional PASS becomes more pronounced as the number of segments grows. A larger $M$ yields a higher PNR under the adopted reliability model.

In the sequel, we analyze two limiting regimes to provide additional insight. Consider $\varepsilon_0\rightarrow0$, which corresponds to a highly reliable environment where failures are rare and repairs are fast. In this case, $\lim_{\varepsilon_0\rightarrow0}{\mathcal{A}}_{\rm{SS}}=\frac{1+0}{1+0}=1$, so the maintainability gain of SWAN over conventional PASS vanishes. This result matches intuition, since segmentation offers little benefit when reliability constraints are weak. 

Next, consider $\varepsilon_0\rightarrow\infty$, which corresponds to a highly unreliable environment where failures are frequent and repairs are slow. Under this condition, both architectures suffer low availability, and the PNRs satisfy $\lim_{\varepsilon_0\rightarrow\infty}{\mathcal{P}}_{\rm{M}}^{+}=\lim_{\varepsilon_0\rightarrow\infty}{\mathcal{P}}_{\rm{SS}}^{+}=0$. However, the relative PNR gain satisfies
\begin{align}\label{SS_Asym_Gain_Environment}
\lim_{\varepsilon_0\rightarrow\infty}{\mathcal{A}}_{\rm{SS}}=\lim_{\varepsilon_0\rightarrow\infty}\frac{\varepsilon_0 {L^2M^2}+1}{\varepsilon_0 {L^2}+{1}}=M^2.
\end{align}
Therefore, even in a highly unreliable environment, segmentation yields an $M^2$-fold improvement in PNR relative to the monolithic-waveguide design. This improvement follows from the reduced failure exposure and the faster recovery associated with shorter, independently maintained segments.
\subsection{Segment Aggregation}
We now analyze the SA-based SWAN architecture. From \eqref{Uplink_PASS_Basic_Model_SA}, the maximum received SNR under SA is given by
\begin{subequations}\label{SNR_Uplink_SA_SWAN_Standard}
\begin{align}
\gamma_{\rm{SA}}&\triangleq\max_{{\bm\psi}\in{\mathcal{X}}_{\rm{SA}}^{\rm{ul}}}
\frac{P\left\lvert \sum_{m=1}^{M}h_{\rm{i}}({\bm\psi}_{0}^{m},{\bm\psi}^{m})
h_{\rm{o}}({\bm\psi}^{m},{\mathbf{u}})\right\rvert^2}{M\sigma^2}\\
&=\max_{{\bm\psi}\in{\mathcal{X}}_{\rm{SA}}^{\rm{ul}}}
\frac{P\left\lvert \sum\limits_{m=1}^{M}\frac{\eta^{\frac{1}{2}}{\rm{e}}^{-{\rm{j}}k_0(\sqrt{(u_x-{\psi}^{m})^2+c_y}+n_{\rm{eff}}({\psi}^{m}-{\psi}_0^{m})) }}{\sqrt{(u_x-{\psi}^{m})^2+c_y}}\right\rvert^2}{M\sigma^2},
\end{align}
\end{subequations}
where ${\bm\psi}\triangleq[{\psi}^{1},\ldots,{\psi}^{M}]^{\mathsf{T}}\in{\mathbbmss{R}}^{M\times1}$ and
\begin{align}
{\mathcal{X}}_{\rm{SA}}^{\rm{ul}}\triangleq\left\{{\bm\psi}\left\lvert\begin{matrix}{\psi}^{m}\in[\psi_{0}^{m},\psi_{0}^{m}+L],m\in[M],\\
\lvert{\psi}^{m}-{\psi}^{m'}\rvert\geq\Delta,m\ne m'\end{matrix}\right.\right\}.
\end{align}
\subsubsection{Optimal Antenna Placement}\label{SA_Antenna_Optimization_Uplink}
To maximize the received SNR under SA, the PA positions must be optimized to achieve constructive signal superposition while minimizing free-space path loss \cite{xu2024rate}. For simplicity, let $M$ be an odd integer.

We first identify the segment containing the projection of the user onto the waveguide. The corresponding index is given by $m^{\star}=\left\lceil\frac{u_x-{\psi}_{0}^{1}}{L}\right\rceil$; see \eqref{SS_Optimal_Uplink}. The PA in the $m^{\star}$th segment is placed directly beneath the user projection, i.e., ${\psi}^{m^{\star}}=u_x$. Next, consider the $(m^{\star}-1)$th segment. To reduce path loss while satisfying the minimum spacing constraint $\Delta$, the PA position is initialized as follows:
\begin{align}\label{SA_Optimization_Update_First}
{\psi}^{m^{\star}-1}=\min\{{\psi}_{0}^{m^{\star}-1}+L,{\psi}^{m^{\star}}-\Delta\}\triangleq {\hat{\psi}}^{m^{\star}-1}.
\end{align} 
This position is then refined to ensure phase alignment between the signals received by the $m^{\star}$th and $(m^{\star}-1)$th PAs. Specifically, the PA at ${\hat{\psi}}^{m^{\star}-1}$ is shifted leftward by a distance $\nu^{m^{\star}-1}>0$ such that
\begin{equation}\label{SA_Slight_Adjustment}
\begin{split}
&({({\hat{\psi}}^{m^{\star}-1}-\nu^{m^{\star}-1}-u_x)^2+c_y})^{1/2}+n_{\rm{eff}}({\hat{\psi}}^{m^{\star}-1}-\nu^{m^{\star}-1}\\
&-{\psi}_{0}^{m^{\star}-1})
=d^{m^{\star}-1}+((d^{m^{\star}}-d^{m^{\star}-1}))\bmod \lambda)\triangleq \hat{d}^{m^{\star}-1},
\end{split}
\end{equation}
where $d^{m^{\star}-1}\triangleq({(\hat{\psi}^{m^{\star}-1}-u_x)^2+c_y})^{1/2}+n_{\rm{eff}}(\hat{\psi}^{m^{\star}-1}-{\psi}_{0}^{m^{\star}-1})$ and $d^{m^{\star}}\triangleq\sqrt{c_y}+n_{\rm{eff}}(u_x-{\psi}_{0}^{m^{\star}})$. The resulting closed-form solution for the phase-alignment shift $\nu^{m^{\star}-1}$ is given by
\begin{equation}\label{SA_Slight_Adjustment_Solution}
\begin{split}
&\nu^{m^{\star}-1}=\\&\left\{\begin{array}{ll}
\hat{\psi}^{m^{\star}-1}-\frac{{\psi}_{0}^{m^{\star}-1}n_{\rm{eff}}^2+\hat{d}^{m^{\star}-1}n_{\rm{eff}}-u_x-\sqrt{\Delta^{m^{\star}-1}}}{n_{\rm{eff}}^2-1}             & {n_{\rm{eff}}\ne1}\\
\hat{\psi}^{m^{\star}-1}-\frac{({\psi}_{0}^{m^{\star}-1}+\hat{d}^{m^{\star}-1})^2-(u_x^2+c_y)}{2({\psi}_{0}^{m^{\star}-1}+\hat{d}^{m^{\star}-1}-u_x)}           & {n_{\rm{eff}}=1}
\end{array}\right.,
\end{split}
\end{equation}
where $\Delta^{m^{\star}-1}\triangleq n_{\rm{eff}}^2(u_x-{\psi}^{m^{\star}-1})^2-2\hat{d}^{m^{\star}-1}n_{\rm{eff}}(u_x-{\psi}_{0}^{m^{\star}-1})+c_y(n_{\rm{eff}}^2-1)+(\hat{d}_{1}^{m^{\star}-1})^2$. Since a propagation distance of one wavelength corresponds to a $2\pi$ phase shift, the optimal shift $\nu^{m^{\star}-1}$ lies on the wavelength scale. This scale is much smaller than the waveguide deployment height $d$, which implies that the resulting change in free-space path loss is negligible. The PA position is then updated as ${\psi}^{m^{\star}-1}=\hat{\psi}^{m^{\star}-1}-\nu^{m^{\star}-1}$. The same procedure applies to the $(m^{\star}-2)$th segment, where the initial position is given by
\begin{align}
{\psi}^{m^{\star}-2}=\min\{{\psi}_{0}^{m^{\star}-2}+L,{\psi}^{m^{\star}-1}-\Delta\}\triangleq\hat{\psi}^{m^{\star}-2}, 
\end{align}
and the corresponding phase-alignment shift $\nu^{m^{\star}-2}$ is computed in an analogous manner using \eqref{SA_Slight_Adjustment} and \eqref{SA_Slight_Adjustment_Solution}. This iterative design continues for all segments with indices $m\leq m^{\star}$ and extends symmetrically to the segments with $m> m^{\star}$.
\subsubsection{Performance of SWAN Under SA}\label{Section: Uplink SA-Based SWAN: Discussion on the Maximum SNR}
The optimization procedure described above enables constructive combination of the received signals \cite{xu2024rate}. As a result, the dual phase shifts induced by signal propagation both inside and outside the waveguide, i.e., $k_0\sqrt{(u_x-{\psi}^{m})^2+c_y}+k_0n_{\rm{eff}}({\psi}^{m}-{\psi}_{0}^{m})$, have no impact on the received SNR. By \eqref{SNR_Uplink_SA_SWAN_Standard}, we express the resulting SNR as follows: 
\begin{align}
\gamma_{\rm{SA}}=
\frac{P\eta}{M\sigma^2}\left( \sum\limits_{m=1}^{M}\frac{1}{({(\hat{\psi}^{m}-\nu^{m}-u_x)^2+c_y})^{1/2}}\right)^2,
\end{align} 
where $\hat{\psi}^{m^{\star}}={\psi}^{m^{\star}}$ and $\nu^{m^{\star}}=0$. As noted earlier, the shift $\nu^{m}$ lies on the wavelength scale, and its impact on free-space path loss is negligible. Therefore, the SNR admits the following approximation:
\begin{align}\label{SNR_SA_Uplink_Approximation_First}
\gamma_{\rm{SA}}\approx
\frac{P\eta}{M\sigma^2}\left( \sum\limits_{m=1}^{M}\frac{1}{({(\hat{\psi}^{m}-u_x)^2+c_y})^{1/2}}\right)^2.
\end{align}
By further taking into account the impact of the segment reliability, we rewrite \eqref{SNR_SA_Uplink_Approximation_First} as the following random variable:
\begin{align}\label{SNR_SA_Uplink_Approximation_First_Random}
\gamma_{\rm{SA}}\approx
\frac{P\eta}{\hat{M}\sigma^2}\left( \sum_{m=1}^{M}\frac{b_m}{({(\hat{\psi}^{m}-u_x)^2+c_y})^{1/2}}\right)^2,
\end{align}
where $\{b_m\}_{m=1}^{M}$ is the set of $M$ i.i.d. binary random variables with $\Pr(b_m=0)=P_{m}^0=\frac{\lambda_{\rm{S}}}{\lambda_{\rm{S}}+\mu_{\rm{S}}}$ and $\Pr(b_m=1)=P_{m}^1=\frac{\mu_{\rm{S}}}{\lambda_{\rm{S}}+\mu_{\rm{S}}}$, and $\hat{M}\triangleq\sum_{m=1}^{M}b_m$. It is worth noting that only the segments in the working state contribute to the received signal. As a result, the effective noise power scales with the number of active segments and changes from $M\sigma^2$ to $\hat{M}\sigma^2$. The PNR under the SA protocol is defined as follows:
\begin{align}
{\mathcal{P}}_{\rm{SA}}^{+}\triangleq \Pr(\log_2(1+\gamma_{\rm{SA}})>0)=\Pr(\gamma_{\rm{SA}}>0),
\end{align}
and the corresponding OP is given by
\begin{align}
{\mathcal{P}}_{\rm{SA}}\triangleq \Pr(\log_2(1+\gamma_{\rm{SA}})<R_0)=\Pr(\gamma_{\rm{SA}}<\tau),
\end{align}
where $R_0$ denotes the target data rate and $\tau=2^{R_0}-1$.
\vspace{-5pt}
\begin{theorem}
The probability of achieving a non-zero rate under the SA protocol is given by
\begin{align}\label{PNR_SA_Basic}
{\mathcal{P}}_{\rm{SA}}^{+}=1-\left(\frac{\lambda_{\rm{S}}}{\lambda_{\rm{S}}+\mu_{\rm{S}}}\right)^M=1-\left(\frac{\varepsilon_0 D_x^2}{\varepsilon_0 D_x^2+M^2}\right)^M.
\end{align}
\end{theorem}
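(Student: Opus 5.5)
The plan is to reduce the event $\{\gamma_{\rm{SA}}>0\}$ to an event about the binary availability indicators $\{b_m\}_{m=1}^{M}$ alone, and then use their independence. From \eqref{SNR_SA_Uplink_Approximation_First_Random}, every term $\frac{b_m}{((\hat{\psi}^{m}-u_x)^2+c_y)^{1/2}}$ is nonnegative. It is strictly positive exactly when $b_m=1$, because the denominator is finite and positive (recall $c_y>0$ owing to the deployment height $d$). Hence the squared sum is strictly positive if and only if $b_m=1$ for at least one $m\in[M]$.

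A small technical point arises in the normalization by $\hat{M}$. When $\hat{M}=\sum_{m=1}^{M}b_m=0$, the expression in \eqref{SNR_SA_Uplink_Approximation_First_Random} is formally $0/0$. I would adopt the physically natural convention that no received signal means zero SNR, so $\gamma_{\rm{SA}}=0$ (equivalently, $\mathcal{R}=0$) on the event that all segments have failed. When $\hat{M}\geq1$, the prefactor $\frac{P\eta}{\hat{M}\sigma^2}$ is finite and positive, so $\gamma_{\rm{SA}}>0$. Therefore $\{\gamma_{\rm{SA}}>0\}=\{\hat{M}\geq1\}$, which is the complement of $\{b_1=\cdots=b_M=0\}$. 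I expect this convention to be the only step needing care; the remainder is direct computation.

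Next, invoke the i.i.d. assumption on segment lifetimes and repair times made at the start of this section. Together with the steady-state probability $P_m^0=\frac{\lambda_{\rm{S}}}{\lambda_{\rm{S}}+\mu_{\rm{S}}}$ obtained from the two-state CTMC (Lemma \ref{Lemma_Waveguide_CTMC_Transition_Probabilities} applied per segment), this gives
\[
\Pr(b_1=\cdots=b_M=0)=\prod_{m=1}^{M}P_m^0=\left(\frac{\lambda_{\rm{S}}}{\lambda_{\rm{S}}+\mu_{\rm{S}}}\right)^M .
\]
Taking the complement yields the first equality in \eqref{PNR_SA_Basic}.

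For the second equality, substitute the segment-level rates $\lambda_{\rm{S}}=\lambda_0 D_x/M$ and $\mu_{\rm{S}}=M\mu_0/D_x$ from \eqref{Rate_Parameter_SWAN}. Then $\frac{\lambda_{\rm{S}}}{\mu_{\rm{S}}}=\varepsilon_0 D_x^2/M^2$, and the ratio becomes
\[
\frac{\lambda_{\rm{S}}}{\lambda_{\rm{S}}+\mu_{\rm{S}}}=\frac{\varepsilon_0 D_x^2}{\varepsilon_0 D_x^2+M^2},
\]
which completes the argument. The result does not depend on the PA positions or on the SNR threshold, since only the event $\hat{M}\geq1$ matters.
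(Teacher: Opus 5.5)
Your proposal is correct and follows essentially the same route as the paper: you reduce $\{\gamma_{\rm{SA}}>0\}$ to the event that at least one $b_m=1$, then use the independence of the $b_m$ to obtain $1-\bigl(\frac{\lambda_{\rm{S}}}{\lambda_{\rm{S}}+\mu_{\rm{S}}}\bigr)^M$. Your explicit convention for the $\hat{M}=0$ case and your substitution of $\lambda_{\rm{S}},\mu_{\rm{S}}$ for the second equality only spell out steps the paper leaves implicit.
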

\vspace{-5pt}
\begin{IEEEproof}
A non-zero rate is achieved if and only if at least one segment is in the woking state, which corresponds to the event that there exists an index $m\in[M]$ such that $b_m=1$. Therefore, ${\mathcal{P}}_{\rm{SA}}^{+}=\Pr(\exists m\in[M]:b_m=1)=1-\Pr(b_1=\ldots=b_M=0)$. Since the random variables $\{b_m\}_{m=1}^{M}$ are statistically independent, it follows that $\Pr(b_1=\ldots=b_M=0)=\prod_{m=1}^{M}\Pr(b_m=0)=(\frac{\lambda_{\rm{S}}}{\lambda_{\rm{S}}+\mu_{\rm{S}}})^M$.
\end{IEEEproof}
Given the service region width $D_x$, ${\mathcal{P}}_{\rm{SA}}^{+}$ increases with the number of segments $M$, since $\frac{\partial {\mathcal{P}}_{\rm{SA}}^{+}}{\partial M}=\left(\frac{\varepsilon_0 D_x^2}{\varepsilon_0 D_x^2+M^2}\right)^M\left(\frac{2M^2}{\varepsilon_0 D_x^2+M^2}+\ln\frac{\varepsilon_0 D_x^2+M^2}{\varepsilon_0 D_x^2}\right)>0$. This result indicates that partitioning a long waveguide into a larger number of segments improves the overall system reliability. Moreover, since ${\mathcal{P}}_{\rm{SA}}^{+}={\mathcal{P}}_{\rm{SS}}^{+}$ when $M=1$, it follows that ${\mathcal{P}}_{\rm{SA}}^{+}>{\mathcal{P}}_{\rm{SS}}^{+}$ for $M>1$. This observation confirms that SA outperforms SS in terms of reliability, which is expected because SA exploits all working segments to receive uplink signals. From \eqref{PNR_SA_Basic}, we further obtain
\begin{align}\label{SA_PNR_Asym}
\lim\nolimits_{M\rightarrow\infty}{\mathcal{P}}_{\rm{SA}}^{+}\simeq1-{\mathcal{O}}((\varepsilon_0 D_x^2M^{-2})^{M})\rightarrow1.
\end{align}
This convergence is much faster than that of ${\mathcal{P}}_{\rm{SS}}^{+}$ in \eqref{SS_PNR_Asym}. This result further confirms the effectiveness of activating multiple segments in SWAN to enhance system maintainability.

We next analyze the OP. It can be expressed as follows:
\begin{align}
{\mathcal{P}}_{\rm{SA}}=\Pr\left( \Lambda_M<{\sqrt{\hat{M}\tau}\sigma}/{\sqrt{P\eta}}\right),
\end{align}
where $\Lambda_M\triangleq\sum_{m=1}^{M}\frac{b_m}{({(\hat{\psi}^{m}-u_x)^2+c_y})^{1/2}}$. Using the statistical independence of $\{b_m\}_{m=1}^{M}$, the OP can be written as follows:
\begin{equation}
\begin{split}
{\mathcal{P}}_{\rm{SA}}&=\sum\nolimits_{{\mathcal{I}}\subseteq[M]}\left(P_{\rm{S}}^0\right)^{M-\lvert{\mathcal{I}}\rvert}\left(P_{\rm{S}}^1\right)^{\lvert{\mathcal{I}}\rvert}\\
&\times{\mathbbmss{1}}_{\left\{
\sum_{m\in{\mathcal{I}}}\frac{1}{({(\hat{\psi}^{m}-u_x)^2+c_y})^{1/2}}<\frac{\sqrt{\lvert{\mathcal{I}}\rvert\tau}\sigma}{\sqrt{P\eta}}\right\}},
\end{split}
\end{equation}
where ${\mathbbmss{1}}_{\{\cdot\}}$ denotes the indicator function, $P_{\rm{S}}^1\triangleq\frac{\mu_{\rm{S}}}{\lambda_{\rm{S}}+\mu_{\rm{S}}}$, and $P_{\rm{S}}^0\triangleq\frac{\lambda_{\rm{S}}}{\lambda_{\rm{S}}+\mu_{\rm{S}}}$. The above expression incurs exponential computational complexity and does not yield clear analytical insight. To obtain a tractable characterization, we derive an upper bound on ${\mathcal{P}}_{\rm{SA}}$. Since $M_{\Lambda}\leq M$, the SNR in \eqref{SNR_SA_Uplink_Approximation_First_Random} satisfies $\gamma_{\rm{SA}}\approx\frac{P\eta}{\hat{M}\sigma^2}\Lambda_M^2\leq\frac{P\eta}{M\sigma^2}\Lambda_M^2$. Accordingly, the OP admits the following upper bound:
\begin{align}
{\mathcal{P}}_{\rm{SA}}\leq\Pr\left( \Lambda_M<{\sqrt{M\tau}\sigma}/{\sqrt{P\eta}}\right)\triangleq\hat{\mathcal{P}}_{\rm{SA}}.
\end{align}
Although $\hat{\mathcal{P}}_{\rm{SA}}$ is simpler than the exact OP, it remains analytically intractable. Therefore, we employ a Gaussian approximation based on moment matching to obtain a closed-form outage approximation. 

The mean and variance of $\Lambda_M$ are given by
\begin{subequations}
\begin{align}
{\mathbbmss{E}}\{\Lambda_M\}&=P_{\rm{S}}^1\sum_{m=1}^{M}\frac{1}{({(\hat{\psi}^{m}-u_x)^2+c_y})^{1/2}}\triangleq\mu_{\Lambda_M},\\
{\mathbbmss{V}}\{\Lambda_M\}&=P_{\rm{S}}^0P_{\rm{S}}^1\sum_{m=1}^{M}\frac{1}{{(\hat{\psi}^{m}-u_x)^2+c_y}}\triangleq\sigma_{\Lambda_M}^2,
\end{align}
\end{subequations}
respectively. We approximate $\Lambda_M$ by a Gaussian random variable with mean $\mu_{\Lambda_M}$ and variance $\sigma_{\Lambda_M}^2$. The OP's upper bound can then be approximated as follows:
\begin{align}\label{OP_Upper_Bound_Gaussian_Approximation}
\hat{\mathcal{P}}_{\rm{SA}}\approx\Phi\left( \frac{{\sqrt{M\tau}\sigma}/{\sqrt{P\eta}}-\mu_{\Lambda_M}}{{\sigma_{\Lambda_M}}}\right),
\end{align}
where $\Phi(\cdot)$ denotes the standard Gaussian CDF.

To obtain further insight, we consider a symmetric user placement scenario in which the user is located directly beneath the center of the $\left(\frac{M+1}{2}\right)$th waveguide segment. This yields $m^{\star}=\frac{M+1}{2}$ and $u_x=\psi_0^{m^{\star}}+\frac{L}{2}$. We also assume $L\gg\Delta$, which represents a mild and practical condition. Under these assumptions, the antenna locations satisfy ${\psi}^{m^{\star}+\hat{m}}=\psi_0^{m^{\star}+\hat{m}}$ and $\psi^{m^{\star}-\hat{m}}=\psi_0^{m^{\star}-\hat{m}}+L$ for $\hat{m}=1,\ldots,\frac{M-1}{2}$. Substituting these relations into \eqref{SNR_SA_Uplink_Approximation_First_Random} yields
\begin{align}
\gamma_{\rm{SA}}\approx
\frac{P\eta}{\hat{M}\sigma^2}\left(\frac{b_{m^{\star}}}{\sqrt{c_y}}+\sum\limits_{\hat{m}=1}^{\frac{M-1}{2}}\frac{b_{m^{\star}-\hat{m}}+b_{m^{\star}+\hat{m}}}{({(L(\hat{m}-\frac{1}{2}))^2+c_y})^{\frac{1}{2}}}\right)^2.
\end{align}
Under this symmetric configuration, the mean and variance simplify to the following \cite[Appendix C]{ouyang2025uplink}:
\begin{subequations}\label{Mean_Variance_SA_Channel_Gain}
\begin{align}
\mu_{\Lambda_M}&\approx P_{\rm{S}}^1\left(\frac{1}{\sqrt{c_y}}+\frac{2}{L}\sinh^{-1}\left(\frac{L(M-1)}{2\sqrt{c_y}}\right)\right),\\
\sigma_{\Lambda_M}^2&\approx\frac{P_{\rm{S}}^0P_{\rm{S}}^1}{\sqrt{c_y}}\left(\frac{1}{{\sqrt{c_y}}}+\frac{2}{L}\tan^{-1}\left(\frac{L(M-1)}{2\sqrt{c_y}}\right)\right).
\end{align}
\end{subequations}

By fixing $D_x$ and letting $M\rightarrow\infty$, we obtain
\begin{align}
\frac{\mu_{\Lambda_M}}{\sqrt{M}}\simeq\tau_{D_x}\sqrt{M}\rightarrow\infty,\quad\frac{\sigma_{\Lambda_M}^2}{M}\simeq\frac{\kappa_{D_x}}{2M^2}\rightarrow0,
\end{align}
where $\tau_{D_x}\triangleq\frac{2}{D_x}\sinh^{-1}\left(\frac{D_x}{2\sqrt{c_y}}\right)>0$ and $\kappa_{D_x}\triangleq\frac{4\varepsilon_0D_x}{\sqrt{c_y}}\tan^{-1}\left(\frac{D_x}{2\sqrt{c_y}}\right)>0$. These asymptotic behaviors imply 
\begin{align}
\lim_{M\rightarrow\infty}\frac{\frac{\sqrt{M\tau}\sigma}{\sqrt{P\eta}}-\mu_{\Lambda_M}}{{\sigma_{\Lambda_M}}}=
\lim_{M\rightarrow\infty}\frac{\frac{\sqrt{\tau}\sigma}{\sqrt{P\eta}}-\frac{\mu_{\Lambda_M}}{\sqrt{M}}}{{\sigma_{\Lambda_M}}/\sqrt{M}}\rightarrow-\infty.
\end{align}
Together with the Gaussian tail expansion $\lim_{x\rightarrow-\infty}\Phi(x)\simeq\frac{-1}{\sqrt{2\pi}x{\rm{e}}^{{x^2}/{2}}}$ \cite[Section 7.6]{olver2010nist}, the OP's upper bound satisfies
\begin{align}\label{SA_OP_Asym}
\lim_{M\rightarrow\infty}\hat{\mathcal{P}}_{\rm{SA}}\simeq{\mathcal{O}}(M^{-\frac{3}{2}}
{\rm{e}}^{-{M^2}{\kappa_{D_x}^{-1}}(\sqrt{M}\tau_{D_x}-\frac{\sqrt{\tau}\sigma}{\sqrt{P\eta}})^2})\rightarrow0.
\end{align}
This analysis shows that the OP upper bound for SA converges to zero as the number of segments increases, and it therefore implies that the OP itself also vanishes asymptotically. 
\vspace{-5pt}
\begin{remark}
The above results highlight the effectiveness of segmentation for improving maintainability. A comparison between \eqref{SS_OP_Asym} with \eqref{SA_OP_Asym} further indicates that SA yields a substantially faster decay of OP than SS. This advantage arises because SA exploits multiple working segments simultaneously, which strengthens reliability and improves maintainability under the segmented-waveguide architecture.
\end{remark}
\vspace{-5pt}
\subsubsection{Comparison with Conventional PASS}\label{Section: SA: Comparison with the Conventional PASS}
We now compare SA-based SWAN with conventional PASS. Since the OP under SA does not yield a tractable closed-form expression, either in its exact form or under the Gaussian approximation, the comparison therefore focuses on the PNR. 

From \eqref{PNR_Conventional_PASS_Further} and \eqref{PNR_SA_Basic}, the relative PNR gain of SA-based SWAN over conventional PASS is given by
\begin{align}\label{Performance_Gain_SA_PASS}
{\mathcal{A}}_{\rm{SA}}\triangleq\frac{{\mathcal{P}}_{\rm{SA}}^{+}}{{\mathcal{P}}_{\rm{M}}^{+}}=(1+D_x^2\varepsilon_{0})\left(1-\left(\frac{L^2\varepsilon_{0}}{L^2\varepsilon_{0}+1}\right)^M\right).
\end{align}
For a fixed service region width $D_x$, ${\mathcal{A}}_{\rm{SA}}$ increases monotonically with the segment number $M$, since $\frac{\partial {\mathcal{A}}_{\rm{SA}}}{\partial M}=\frac{1}{{\mathcal{P}}_{\rm{M}}^{+}}\frac{\partial {\mathcal{P}}_{\rm{SA}}^{+}}{\partial M}>0$. This trend is consistent with the behavior observed for ${\mathcal{A}}_{\rm{SS}}$. Besides, $\lim_{M\rightarrow\infty}{\mathcal{A}}_{\rm{SA}}=1+D_x^2\varepsilon_{0}$, which coincides with the asymptotic limit of ${\mathcal{A}}_{\rm{SS}}$ as $M\rightarrow\infty$. Although both gains converge to the same limit, their convergence rates differ significantly. The gap to the limit under SS satisfies $1+D_x^2\varepsilon_{0}-{\mathcal{A}}_{\rm{SS}}\simeq{\mathcal{O}}(M^{-2})$. Under SA, the gap $1+D_x^2\varepsilon_{0}-{\mathcal{A}}_{\rm{SA}}$ decays as ${\mathcal{O}}((D_x^2\varepsilon_{0})^{M}M^{-2M})$, and this decay rate is much faster than ${\mathcal{O}}(M^{-2})$ as $M\rightarrow\infty$. This behavior shows that SA reaches its asymptotic reliability gain substantially faster than SS. The result highlights the benefit of activating multiple segments in SWAN for enhanced maintainability and long-term reliability.

We next consider the case where the segment length $L$ is fixed while $M$ scales. Under this setting, it follows directly from \eqref{Performance_Gain_SA_PASS} that ${\mathcal{A}}_{\rm{SA}}$ increases monotonically with $M$, since $1+D_x^2\varepsilon_{0}=1+M^2L^2\varepsilon_{0}$ increases with $M$ while $\left(\frac{L^2\varepsilon_{0}}{L^2\varepsilon_{0}+1}\right)^M$ decreases with $M$. Moreover, the performance gain scales on the order of ${\mathcal{O}}(M^2L^2\varepsilon_{0})$ as $M$ increases. This scaling rate is faster than the ${\mathcal{O}}(\frac{\varepsilon_0 {L^2}}{\varepsilon_0 {L^2}+{1}}M^2)$ behavior observed under the SS protocol. The difference reflects the additional reliability benefit of exploiting all working segments under SA.

We next examine two limiting regimes of the F3R $\varepsilon_0$. When $\varepsilon_0\rightarrow0$, failures are rare and repairs are fast. In this regime, $\lim_{\varepsilon_0\rightarrow0}{\mathcal{A}}_{\rm{SA}}=1$, which matches the corresponding limit for ${\mathcal{A}}_{\rm{SS}}$. When $\varepsilon_0\rightarrow\infty$, failures are frequent and repairs are slow, so both ${\mathcal{P}}_{\rm{SA}}^{+}$ and ${\mathcal{P}}_{\rm{M}}^{+}$ approach zero, which indicates that persistent failures severely degrade the availability of all architectures. However, the relative performance gain satisfies
\begin{align}\label{SA_Asym_Gain_Environment}
\lim\nolimits_{\varepsilon_0\rightarrow\infty}{\mathcal{A}}_{\rm{SA}}=M^3,
\end{align}
where a detailed derivation is provided in Appendix \ref{Proof_Limitation_Basic}. 
\vspace{-5pt}
\begin{remark}\label{Remark_Performance_Gain_Environment}
This result demonstrates that, even under extremely unreliable conditions, segmentation enables SA-based SWAN to achieve an $M^3$-fold improvement in PNR relative to conventional PASS. This gain exceeds the $M^2$-fold improvement obtained under SS, and it further confirms the benefit of exploiting multiple segments simultaneously.
\end{remark}
\vspace{-5pt}
\section{Numerical Results}\label{Section_Numerical_Results}
This section presents numerical results to validate the analytical results. Unless stated otherwise, the system parameters are set as follows: carrier frequency $f_{\rm{c}} = 28$ GHz, effective refractive index $n_{\rm{eff}} = 1.4$, minimum inter-antenna spacing $\Delta = \frac{\lambda}{2}$, transmit power $P = 10$ dBm, and noise power $\sigma^2=-90$ dBm. The service region is modeled as a rectangular area centered at the origin, with side lengths $D_x$ and $D_y = 20$ m along the $x$- and $y$-axes, respectively, as illustrated in {\figurename} {\ref{Figure: SWAN_System_Model}}. The user is located at the origin, so $u_x=u_y=0$ m. The waveguide is deployed at height $d = 3$ m along the $x$-axis with fixed $y$-coordinate equal to zero. The $x$-coordinate of the feed point of the first waveguide segment is given by $\psi_{0}^{1}=-\frac{D_x}{2}$. The maintainability of SWAN is compared with that of conventional PASS, which uses a single monolithic waveguide deployed over the same region and uses the same feed-point location $\psi_0^{\rm{M}}=-\frac{D_x}{2}$. All results are averaged over $10^6$ independent realizations of the waveguide operational states, namely, \emph{working} or \emph{failed}.

\begin{figure}[!t]
\centering
    \subfigure[$D_x=50$ m.]
    {
        \includegraphics[width=0.45\textwidth]{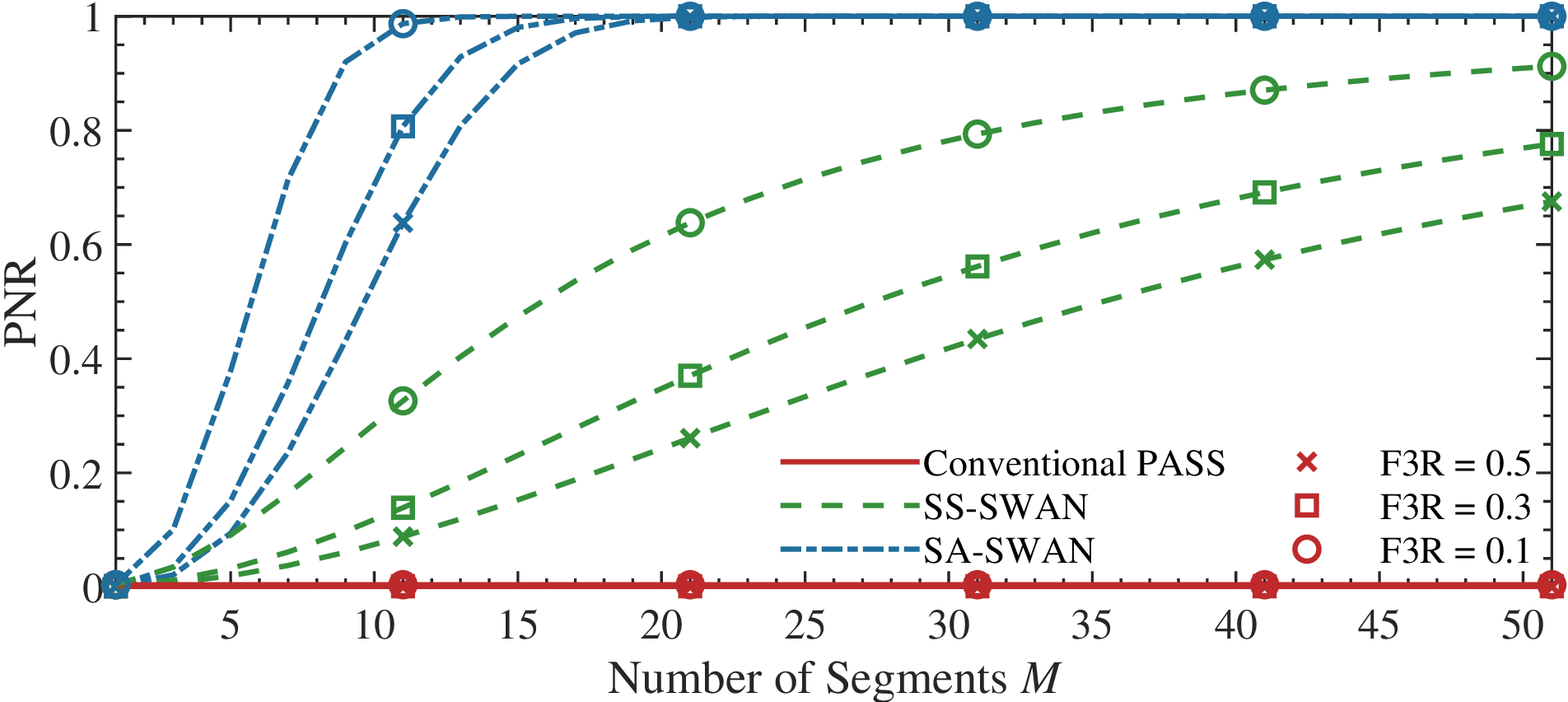}
	   \label{Figure_PNR_Number_Segment_Fig1}
    }
   \subfigure[$L=1$ m.]
    {
        \includegraphics[width=0.45\textwidth]{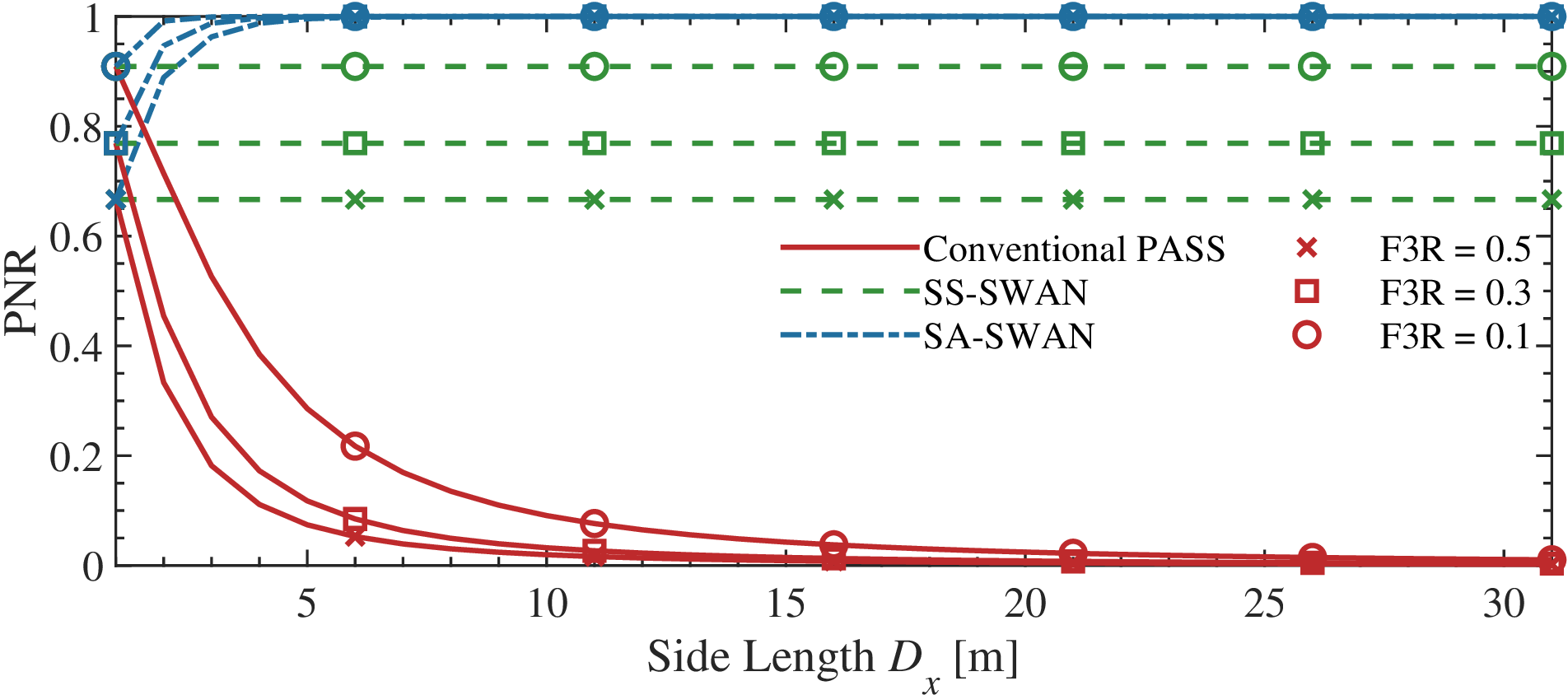}
	   \label{Figure_PNR_Number_Segment_Fig2}
    }
\caption{PNR comparison of SS-based SWAN, SA-based SWAN, and conventional PASS.}
\label{Figure: Figure_PNR_Number_Segment}
\vspace{-10pt}
\end{figure}

\subsection{Probability of Non-Zero Rate}
{\figurename} {\ref{Figure: Figure_PNR_Number_Segment}} compares the PNR of SS-based SWAN, SA-based SWAN, and conventional PASS in order to evaluate their maintainability. 

In {\figurename} {\ref{Figure_PNR_Number_Segment_Fig1}}, the PNR is plotted as a function of the number of segments $M$ for a fixed service region width $D_x=50$ m and several values of the F3R $\varepsilon_0$. As observed from {\figurename} {\ref{Figure_PNR_Number_Segment_Fig1}}, under all considered reliability conditions, i.e., for all examined values of $\varepsilon_0$, conventional PASS achieves a PNR close to zero. This result indicates poor maintainability and low long-term reliability. The underlying reason is that conventional PASS relies on a single long waveguide, which is more vulnerable to random failures and incurs higher repair cost and longer downtime. In contrast, both SS-based SWAN and SA-based SWAN achieve a significantly higher PNR than conventional PASS for all illustrated system configurations. Moreover, the performance improvement becomes more pronounced as the number of segments increases, which agrees with the analytical trends derived in Section {\ref{Section:Uplink SWAN}}. {\figurename} {\ref{Figure_PNR_Number_Segment_Fig1}} also shows that SA-based SWAN consistently outperforms SS-based SWAN. This advantage arises because SA can exploit multiple working segments simultaneously, and the system provides a non-zero rate whenever at least one segment is in the working state.

Based on the analytical results in \eqref{SS_PNR_Asym} and \eqref{SA_PNR_Asym}, the PNRs achieved by both SS and SA converge to one as the number of segments increases while $D_x$ is fixed. However, the convergence rate of SA-based SWAN is substantially faster than that of SS-based SWAN. This trend is clearly illustrated in {\figurename} {\ref{Figure_PNR_Number_Segment_Fig1}}, where the PNR of SA-based SWAN approaches one rapidly once the number of segments reaches roughly $20$. These results confirm that, from a maintainability perspective, increasing the number of segments significantly enhances system reliability for a fixed service region size. In practice, deploying an excessively large number of segments is constrained by hardware cost, control complexity, and physical implementation considerations. Nevertheless, the numerical results demonstrate that a moderate number of segments is sufficient to achieve near-optimal reliability performance. This observation highlights a favorable tradeoff between system complexity and maintainability efficiency.

{\figurename} {\ref{Figure_PNR_Number_Segment_Fig2}} further illustrates the PNR as a function of the service region width $D_x$ or, equivalently, the number of segments when the segment length $L$ is fixed. As expected, the PNR achieved by conventional PASS decreases monotonically with $D_x$, as maintaining a longer monolithic waveguide leads to a higher likelihood of random failures as well as longer repair durations. In contrast, the PNR achieved by SS-based SWAN remains constant as the number of segments increases. This result follows from the fact that SS relies solely on the operational state of the segment closest to the user. Consequently, its reliability performance depends only on the fixed segment length rather than on the overall service region width. This observation is consistent with the analytical result in \eqref{SS_PNR_Expression} and demonstrates that SS-based SWAN maintains stable maintainability even when serving a wider region. Compared with both conventional PASS and SS-based SWAN, SA-based SWAN achieves a PNR that increases monotonically with the number of segments. This advantage originates from the simultaneous utilization of multiple segments under the SA protocol. As more segments are deployed, the probability that at least one segment remains operational increases exponentially, as indicated by \eqref{PNR_SA_Basic}. As a result, the PNR achieved by SA rapidly converges to one as the number of segments grows. Taken together, the results in {\figurename} {\ref{Figure_PNR_Number_Segment_Fig2}} further validates the effectiveness of segmented waveguides for improving maintainability and reliability in PASS-based communications.

\begin{figure}[!t]
\centering
\includegraphics[width=0.45\textwidth]{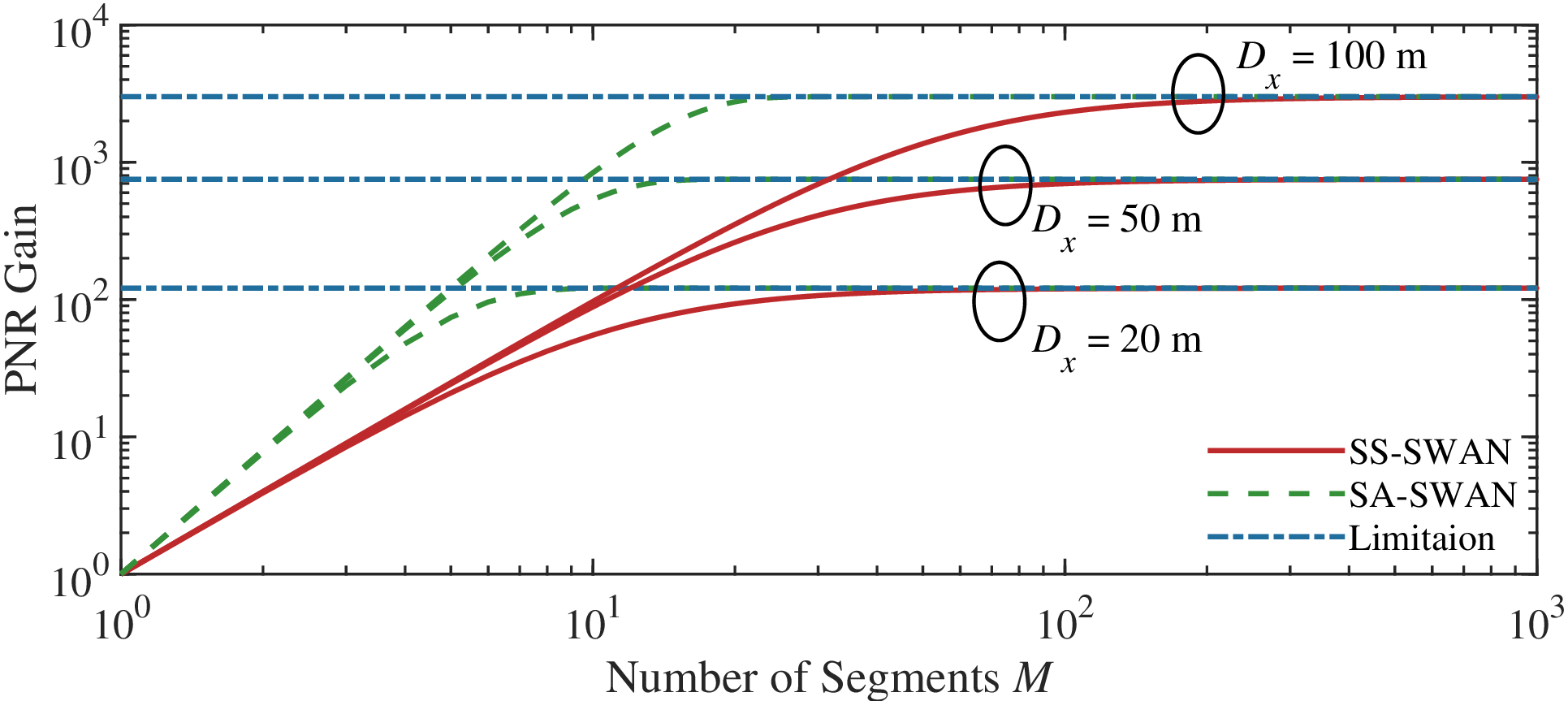}
\caption{PNR gain versus the segment number. $\varepsilon_0=0.3$.}
\label{Figure_Gain_Number_Segment_Fig1}
\vspace{-10pt}
\end{figure}

{\figurename} {\ref{Figure_Gain_Number_Segment_Fig1}} compares the PNR gains of SS-based and SA-based SWAN relative to conventional PASS, i.e., ${\mathcal{A}}_{\rm{SS}}=\frac{{\mathcal{P}}_{\rm{SS}}^{+}}{{\mathcal{P}}_{\rm{M}}^{+}}$ and ${\mathcal{A}}_{\rm{SA}}=\frac{{\mathcal{P}}_{\rm{SA}}^{+}}{{\mathcal{P}}_{\rm{M}}^{+}}$ under selected values of the service region width $D_x$. As shown in the figure, the PNR gains achieved by both SS and SA increase with the number of segments $M$, which is consistent with the trends observed in {\figurename} {\ref{Figure_PNR_Number_Segment_Fig1}}. Moreover, the PNR gain of SWAN over conventional PASS becomes more pronounced for larger service regions, i.e., for larger values of $D_x$. This observation supports the analytical conclusions in Remark \ref{Remark_Performance_Gain_Side_Length}. The underlying reason is that the reliability of the monolithic-waveguide PASS degrades rapidly with increasing $D_x$, whereas the reliability of SWAN improves substantially as the segment length decreases. {\figurename} {\ref{Figure_Gain_Number_Segment_Fig1}} also shows, as $M$ increases, both ${\mathcal{A}}_{\rm{SS}}$ and ${\mathcal{A}}_{\rm{SA}}$ converge to the same limiting value $1+D_x^2\varepsilon_{0}$, as derived in Sections \ref{Section: SS: Comparison with the Conventional PASS} and \ref{Section: SA: Comparison with the Conventional PASS}. However, ${\mathcal{A}}_{\rm{SA}}$ converges much faster than ${\mathcal{A}}_{\rm{SS}}$. This difference follows from the use of multiple segments under SA, which increases the probability that at least one segment remains operational. This observation is consistent with the analysis in Section \ref{Section: SA: Comparison with the Conventional PASS}.

\begin{figure}[!t]
\centering
\includegraphics[width=0.45\textwidth]{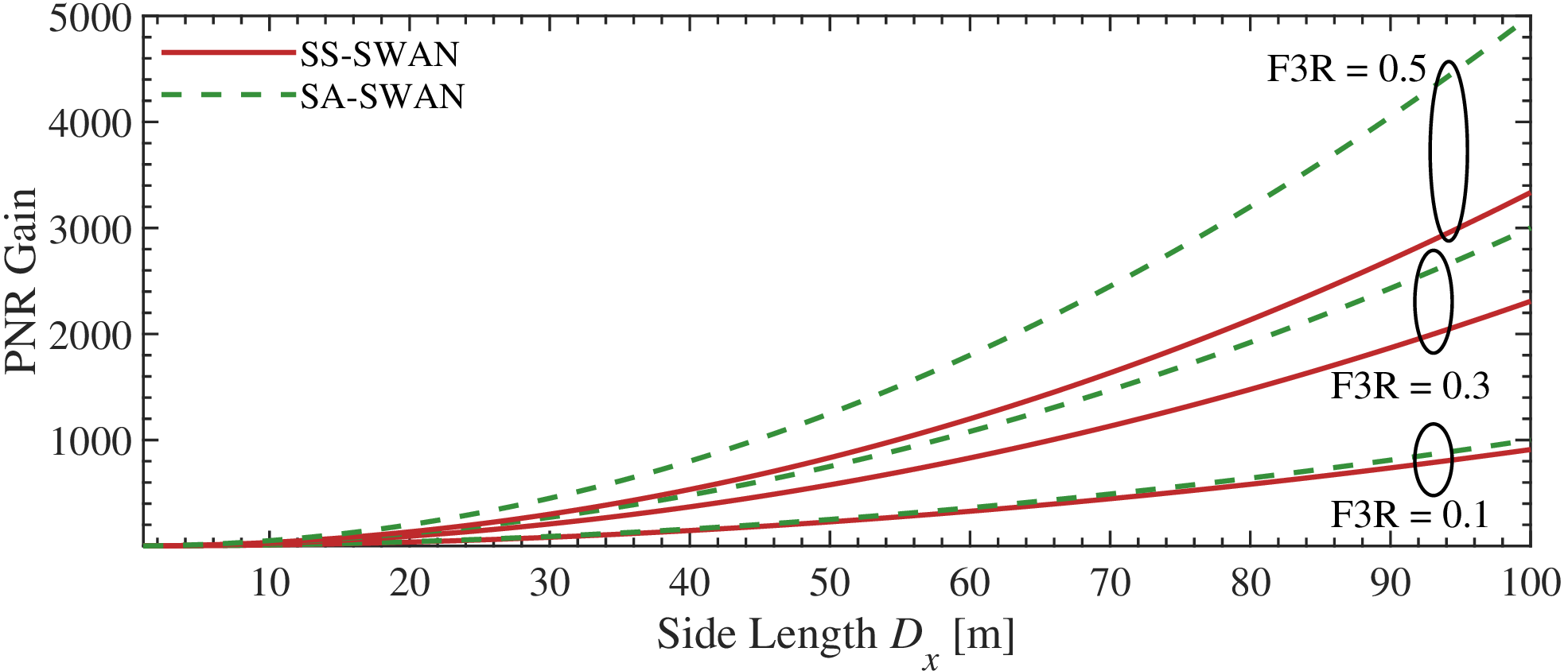}
\caption{PNR gain versus the side length. $L=1$ m.}
\label{Figure_Gain_Number_Segment_Fig2}
\vspace{-10pt}
\end{figure}

{\figurename} {\ref{Figure_Gain_Number_Segment_Fig2}} further depicts the PNR gain versus the segment number for a fixed segment length $L$ and selected values of the F3R $\varepsilon_0$. As shown in the figure, the PNR gains achieved by both SS and SA increase with the number of segments. The performance gap between the two schemes becomes more pronounced as the number of segments grows. This behavior is expected, since the SA protocol exploits all working segments, whereas the SS protocol relies only on the segment closest to the user. In addition, the growth rate of the PNR gain under SA exceeds that under SS. This observation aligns with the analytical results in Section \ref{Section: SA: Comparison with the Conventional PASS}, where ${\mathcal{A}}_{\rm{SA}}$ increases exponentially with $M$, while ${\mathcal{A}}_{\rm{SS}}$ grows only quadratically. {\figurename} {\ref{Figure_Gain_Number_Segment_Fig2}} also shows that, in highly reliable environments characterized by small values of $\varepsilon_0$, the performance gap between SA and SS becomes negligible. As $\varepsilon_0$ increases, this gap widens. This trend reflects the fact that employing multiple segments becomes increasingly beneficial for maintaining system reliability in less reliable environments.

\begin{figure}[!t]
\centering
\includegraphics[width=0.45\textwidth]{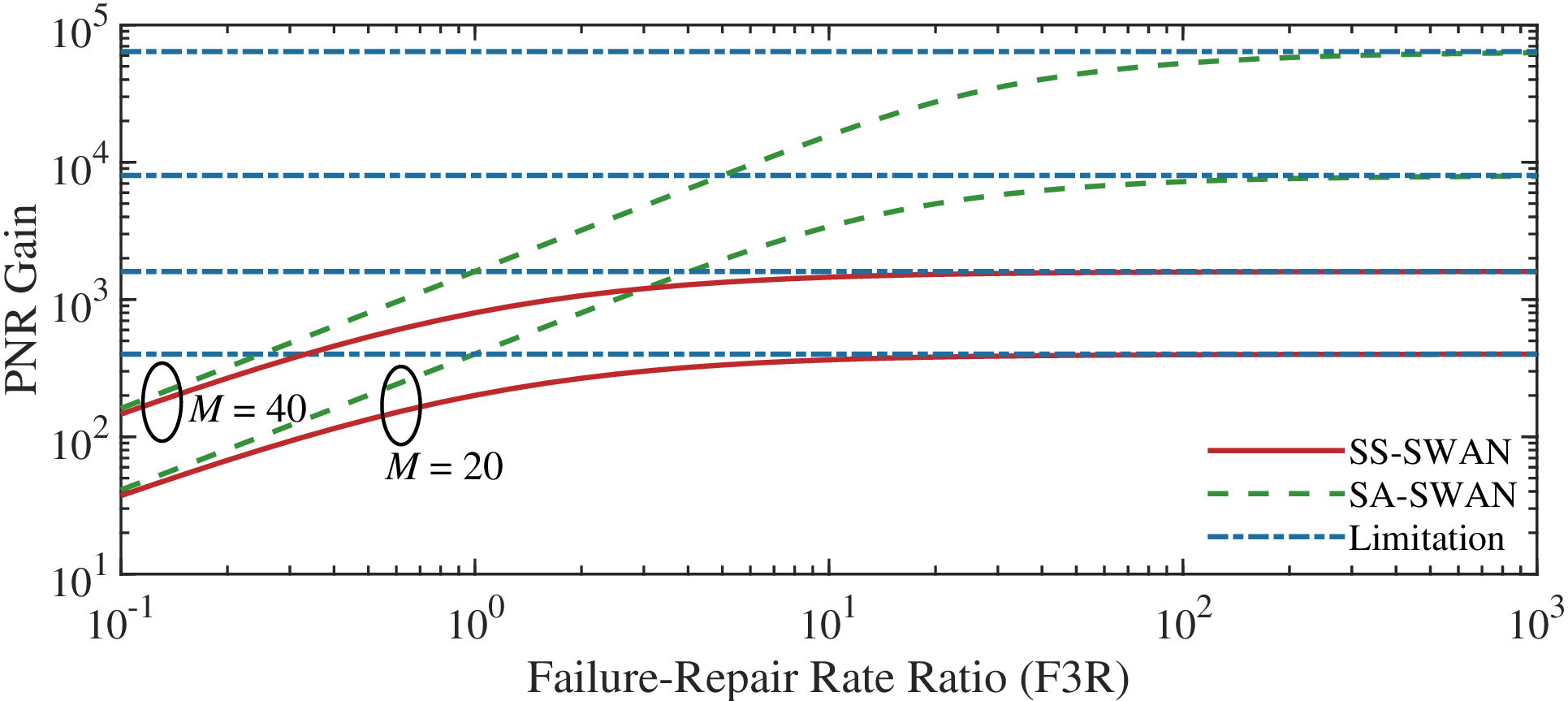}
\caption{PNR gain versus the F3R $\varepsilon_0$. $L=1$ m.}
\label{Figure_Gain_Number_Segment_Fig3}
\vspace{-10pt}
\end{figure}

{\figurename} {\ref{Figure_Gain_Number_Segment_Fig3}} illustrates the PNR gain as a function of the F3R $\varepsilon_0$ to examine the impact of environmental reliability on system performance. As shown in the figure, in highly reliable environments with small $\varepsilon_0$, the performance gains of both SS and SA over conventional PASS are marginal. In such cases, failures occur infrequently and repairs complete rapidly, which reduces the need for segmentation to improve maintainability. As $\varepsilon_0$ increases, the environment becomes more unreliable, and thus the effectiveness of the SWAN architecture becomes more pronounced, which leads to significantly higher PNR gains. In the larger-$\varepsilon_0$ regime, both ${\mathcal{A}}_{\rm{SA}}$ and ${\mathcal{A}}_{\rm{SS}}$ converge to their respective asymptotic limits, namely $M^2$ and $M^3$, which is consistent with the results stated in Remark \ref{Remark_Performance_Gain_Environment}.

\subsection{Outage Probability}
After examining the PNR, we turn to the OP, which is illustrated in {\figurename} {\ref{Figure: Figure_OP_Number_Segment_Fig}}.

{\figurename} {\ref{Figure_OP_Number_Segment_Fig1}} compares the OP achieved by SS-based SWAN, SA-based SWAN, and conventional PASS for selected values of the segment number $M$, with fixed service-region width $D_x$. For reference, the Gaussian-approximation-based upper bound for the OP of SA-based SWAN, derived from \eqref{OP_Upper_Bound_Gaussian_Approximation}, is also plotted. In addition, the OP achieved by SA without PA placement optimization is included. In this unoptimized case, the PA of each segment is placed at the center of the segment. Several observations can be drawn from {\figurename} {\ref{Figure_OP_Number_Segment_Fig1}}. First, as the number of segments increases, the OP achieved by both SS-based SWAN and optimized SA-based SWAN decreases. This trend confirms the effectiveness of the proposed antenna placement refinement method presented in Section \ref{SA_Antenna_Optimization_Uplink}. Second, the OP achieved by SA-based SWAN is substantially lower than that achieved by SS-based SWAN. Moreover, the decay rate of the OP under SA is much faster that that under SS, which is consistent with the behavior predicted by the Gaussian upper bound. These results demonstrate the clear advantage of simultaneously activating multiple segments under the SA protocol. A further observation is that the unoptimized SA baseline yields poor OP performance, and it remains close to that of conventional PASS. This observation occurs despite the use of multiple segments. The result indicates that segmentation alone can improve the PNR, but it does not guarantee a low OP when the PA locations do not ensure constructive signal combining. The comparison therefore highlights the importance of PA placement optimization in SWAN for achieving both strong communication performance and robust maintainability.

\begin{figure}[!t]
\centering
    \subfigure[$D_x=50$ m.]
    {
        \includegraphics[width=0.45\textwidth]{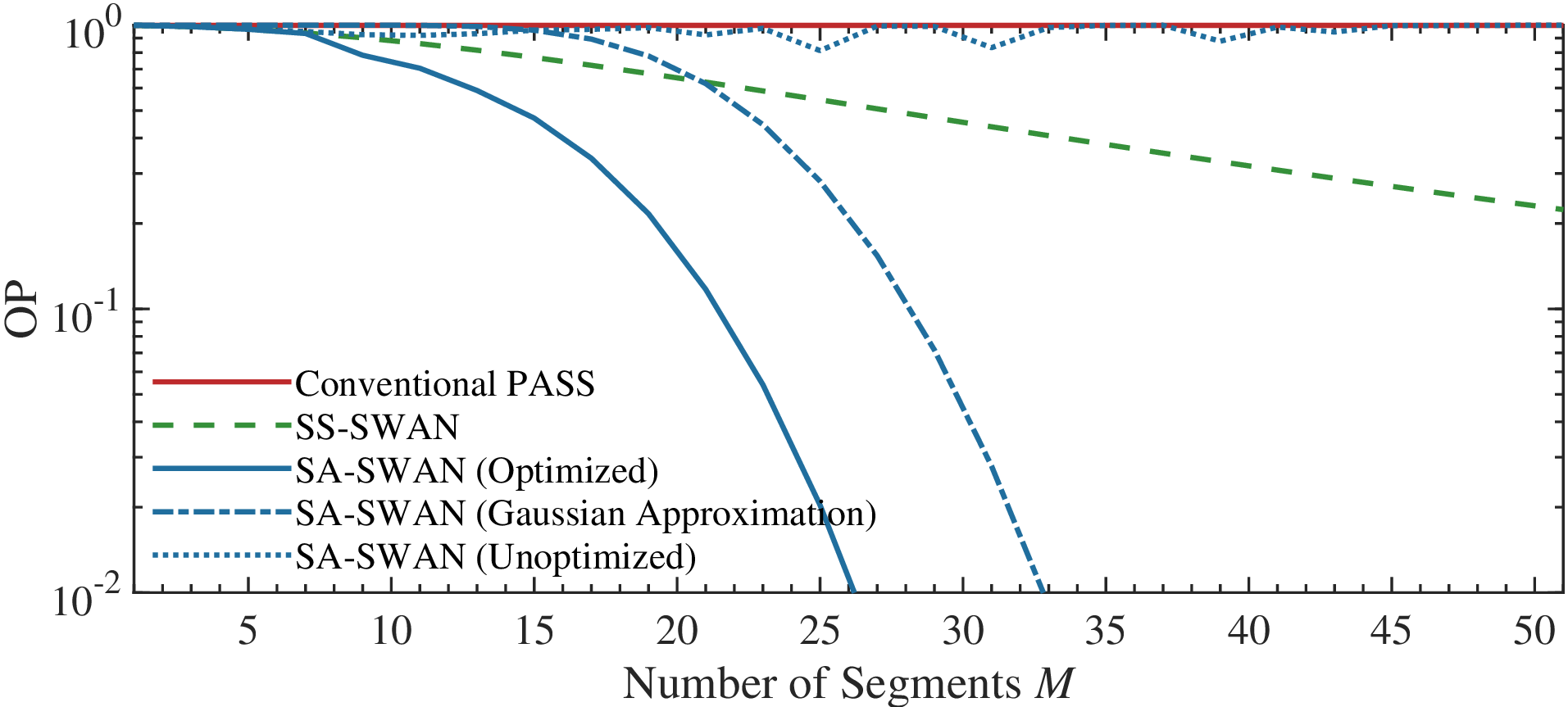}
	   \label{Figure_OP_Number_Segment_Fig1}
    }
   \subfigure[$L=1$ m.]
    {
        \includegraphics[width=0.45\textwidth]{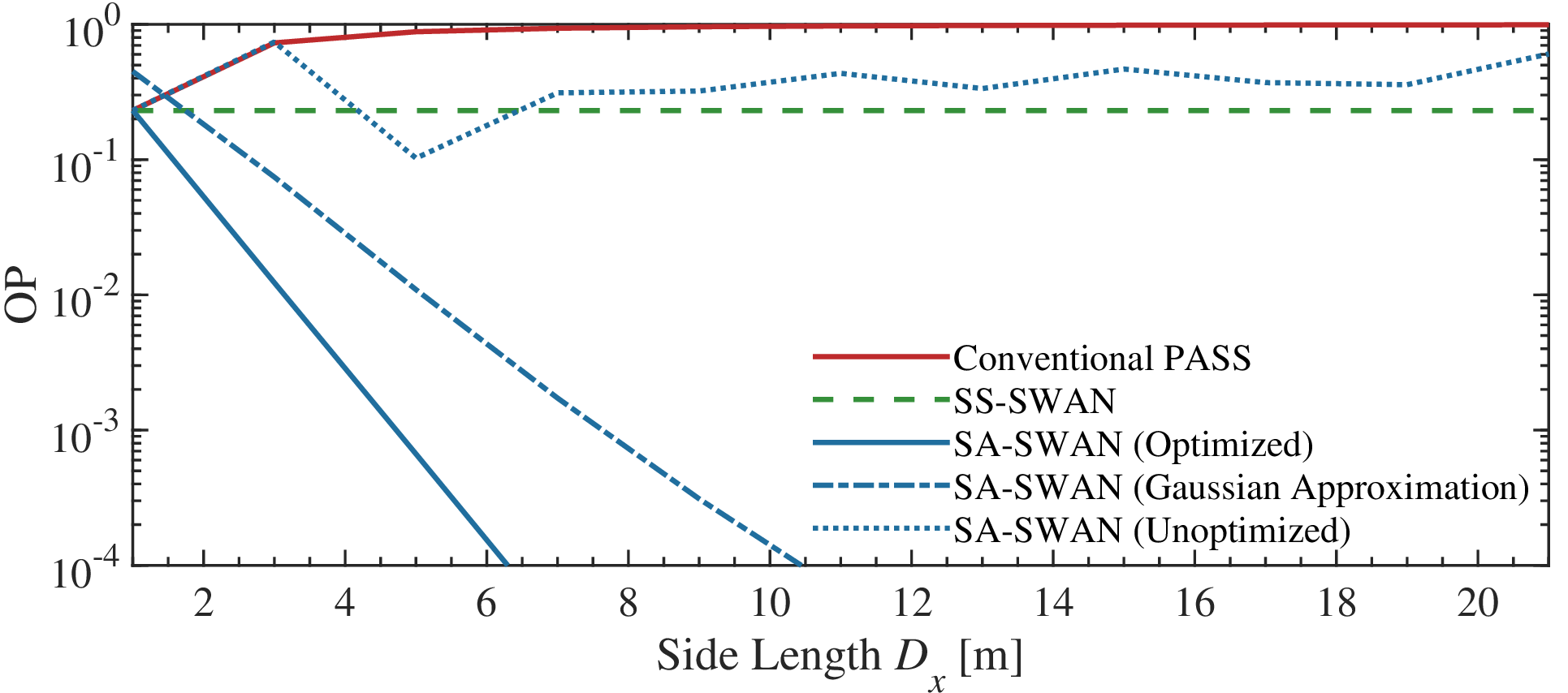}
	   \label{Figure_OP_Number_Segment_Fig2}
    }
\caption{OP comparison of SWAN and conventional PASS. $\varepsilon_0=0.3$ and $R_0=0.9\log_2(1+\gamma_{\rm{M}})$.}
\label{Figure: Figure_OP_Number_Segment_Fig}
\vspace{-10pt}
\end{figure}

{\figurename} {\ref{Figure_OP_Number_Segment_Fig2}} illustrates the OP as a function of the service region width $D_x$ for a fixed segment length $L$. As shown in the figure, the OP achieved by SS-based SWAN remains constant over $D_x$, since SS depends only on the operational state of the selected segment and this state is determined by the fixed segment length. In contrast, the OP of optimized SA-based SWAN decreases monotonically as $D_x$ increases, or equivalently, as the number of segments grows. This trend differs fundamentally from that of conventional PASS, whose OP increases with $D_x$. These observations are consistent with the trends shown in {\figurename} {\ref{Figure_PNR_Number_Segment_Fig2}}. {\figurename} {\ref{Figure_OP_Number_Segment_Fig2}} also shows that the unoptimized SA baseline maintains a high OP across all values of $D_x$, which again confirms that PA placement optimization is essential for realizing the full benefit of SA-based SWAN.
\section{Conclusion}\label{Section_Conclusion}
This article developed a CTMC-based analytical framework to characterize the maintainability and reliability of PASS under random waveguide failures and repairs. We analyzed the steady-state probabilities that a waveguide is operational or failed. By incorporating waveguide availability into the received SNR, we modeled the achievable rate of PASS as a random variable and quantified system maintainability and reliability using the PNR and OP. Using the proposed framework, we derived closed-form expressions for the PNR and OP of conventional PASS and SWAN, and characterized their scaling laws w.r.t. the service region size and the number of segments. The analysis demonstrated that SWAN provides substantially stronger maintainability than conventional PASS. Both analytical and numerical results showed that SWAN achieves a higher PNR and a lower OP than conventional PASS, since it relies on shorter segments that exhibit lower failure exposure and faster recovery than a monolithic waveguide under the adopted length-scaling model. The results also showed that SA-based SWAN outperforms SS-based SWAN, since SA can exploit multiple working segments simultaneously. This gain requires PA placement optimization to ensure constructive combining and to fully realize the benefits of SA. Collectively, these observations indicate that SWAN is a promising architecture for practical PASS deployment, particularly when system maintainability is a key design objective.
\begin{appendix}
\subsection{Proof of Lemma \ref{Lemma_Waveguide_CTMC_Transition_Probabilities}}\label{Proof_Lemma_Waveguide_CTMC_Transition_Probabilities}
Consider a small time interval $\Delta t$. If the waveguide is in the working state at time $t$, it remains in the working state with probability $1-\lambda_{\rm{M}}\Delta t+o(\Delta t)$ (see \eqref{Definition_Failture_Rate}) and transitions to the failed state with probability $\lambda_{\rm{M}}\Delta t+o(\Delta t)$. Conversely, if the waveguide is failed, it remains in that state with probability $1-\mu_{\rm{M}}\Delta t+o(\Delta t)$ and returns to the state of working with probability $\mu_{\rm{M}}\Delta t+o(\Delta t)$. Conditioned on the state at time $t$, the probability of being in the working state at time $t+\Delta t$, given that the system was working at time $0$, satisfies
\begin{equation}
\begin{split}
p_{11}(t+\Delta t)&=p_{11}(t)(1-\lambda_{\rm{M}}\Delta t)\\
&+p_{10}(t)(\mu_{\rm{M}}\Delta t)+o(\Delta t).
\end{split}
\end{equation}
Subtracting $p_{11}(t)$ from both sides, dividing by $\Delta t$, and taking the limit as $\Delta t\rightarrow0$ yields the following first-order differential equation:
\begin{align}
\frac{\rm{d}}{{\rm{d}}t}p_{11}(t)=-\lambda_{\rm{M}}p_{11}(t)+\mu_{\rm{M}}p_{10}(t).
\end{align}
Since $p_{10}(t)=1-p_{11}(t)$, this becomes
\begin{align}
\frac{\rm{d}}{{\rm{d}}t}p_{11}(t)=\mu_{\rm{M}}-(\lambda_{\rm{M}}+\mu_{\rm{M}})p_{11}(t),
\end{align}
subject to the initial condition $p_{11}(0)=1$. The solution to this first-order linear ordinary differential equation is given by
\begin{align}
p_{11}(t)=\frac{\mu_{\rm{M}}}{\lambda_{\rm{M}}+\mu_{\rm{M}}}+\frac{\lambda_{\rm{M}}}{\lambda_{\rm{M}}+\mu_{\rm{M}}}{\rm{e}}^{-(\lambda_{\rm{M}}+\mu_{\rm{M}})t}.
\end{align}
Using the identity $p_{10}(t)=1-p_{11}(t)$ immediately gives \eqref{Waveguide_CTMC_Transition_Probability2}. By symmetry, initializing the system in the failed state at $t=0$ leads to \eqref{Waveguide_CTMC_Transition_Probability3} and \eqref{Waveguide_CTMC_Transition_Probability4}, which completes the proof.
\subsection{Derivation of \eqref{SA_Asym_Gain_Environment}}\label{Proof_Limitation_Basic}
Recall that $\frac{L^2\varepsilon_{0}}{L^2\varepsilon_{0}+1}=1-\frac{1}{L^2\varepsilon_{0}+1}$. So, as $\varepsilon_{0}\rightarrow\infty$, 
\begin{align}
\left(1-\frac{1}{L^2\varepsilon_{0}+1}\right)^M\simeq1-\frac{M}{L^2\varepsilon_{0}+1}+{\mathcal{O}}\left(\frac{1}{\varepsilon_{0}^2}\right),
\end{align}
and hence
\begin{align}
1-\left(\frac{L^2\varepsilon_{0}}{L^2\varepsilon_{0}+1}\right)^M\simeq\frac{M}{L^2\varepsilon_{0}+1}+{\mathcal{O}}\left(\frac{1}{\varepsilon_{0}^2}\right).
\end{align}
Therefore, as $\varepsilon_0\rightarrow\infty$,
\begin{subequations}
\begin{align}
{\mathcal{A}}_{\rm{SA}}&\simeq
(1+D_x^2\varepsilon_{0})\left(\frac{M}{L^2\varepsilon_{0}+1}+{\mathcal{O}}\left(\frac{1}{\varepsilon_{0}^2}\right)\right)\\
&\simeq\frac{M(1+D_x^2\varepsilon_{0})}{L^2\varepsilon_{0}+1}+{\mathcal{O}}\left(\frac{1}{\varepsilon_{0}}\right).
\end{align}
\end{subequations}
Taking the limit $\varepsilon_{0}\rightarrow\infty$ yields $\lim\nolimits_{\varepsilon_0\rightarrow\infty}{\mathcal{A}}_{\rm{SA}}=\frac{MD_x^2\varepsilon_{0}}{L^2\varepsilon_{0}}=M\frac{D_x^2}{L^2}$. Using the identity $D_x=ML$, we obtain $\lim_{\varepsilon_0\rightarrow\infty}{\mathcal{A}}_{\rm{SA}}=M^3$.
\end{appendix}
\bibliographystyle{IEEEtran}
\bibliography{mybib}
\end{document}